\documentclass[10pt,reqno,oneside]{amsart}

\usepackage[T1]{fontenc}
\usepackage[utf8]{inputenc}
\usepackage[british]{babel}
\usepackage{amsmath,amssymb,amsthm}
\usepackage{geometry}
\usepackage{microtype}
\usepackage{tikz}
\usetikzlibrary{calc}
\usepackage{xcolor}
\usepackage{hyperref}
\hypersetup{colorlinks=true,linkcolor=blue!55!black,citecolor=blue!55!black,
            urlcolor=blue!55!black}

\theoremstyle{plain}
\newtheorem{theorem}{Theorem}[section]

\newtheorem{proposition}[theorem]{Proposition}

\theoremstyle{definition}

\theoremstyle{remark}

\newcommand{\Sph}{\mathbb{S}}
\newcommand{\Reals}{\mathbb{R}}
\def\d{{\rm d}}
\def\beq{\begin{equation}}
\def\eeq{\end{equation}}
\newcommand{\iiota}{\dot\iota}
\newcommand{\posted}{\overset{!}{=}}

\numberwithin{equation}{section}

\begin{document}

\title{The kinematic structures and the inertial geometry of a moving charge}

%% Byline: that of arXiv:2608.21642, on his instruction of 2026-09-02 --- one author.
\author{C.~S. L\'opez-Monsalvo}
\address{Departamento de Ciencias B\'asicas, Universidad Aut\'onoma Metropolitana --
  Azcapotzalco, Avenida San Pablo 420, Colonia Nueva El Rosario, Azcapotzalco 02128,
  Ciudad de M\'exico, Mexico}
\email{cslm@azc.uam.mx}

\date{\today}

\begin{abstract}
We ask how much of the geometry a charged particle moves in is fixed by its motion, and how much a particle must bring. We ask of a symplectic structure only that it relate velocity to momentum as Hamilton's equations do, and we ask it of every energy at once. In particular, we show that the structures meeting that demand are the canonical one and its twists by a closed two-form of the base. A field provides the two-form, a particle the multiplier before it, which we identify constitutively with its charge. Thus, a single energy governs a family of structures, and each particle takes the one its charge fixes. We then ask what a particle must bring to be given a momentum, and we show that the degree of that map settles the degree at which a field enters Newton's Second Law. An antisymmetric bilinear form returns no Lorentz force, whilst a Randers metric returns one --- a length whose difference from a Riemannian one is linear in the velocity. Moreover, we find that metric already within the twisted structure, as its primitive over a level of the free energy, and its law of transport to be nonlinear, no affine connection being known to serve. Under an indefinite signature the length parts from the dynamics, and the extremals turn from shortest to longest. On the round sphere a monopole flux leaves no such metric, whilst the transport remains and prequantisation, given a unit of action, restricts the charge to a lattice. In this manner, we conclude that each charge-to-mass ratio receives a geometry of its own, so that by a functionalist criterion none of them is the spacetime of a charged particle.
\end{abstract}

\maketitle

%%============================================================
\section{Introduction}\label{sec.intro}
%%============================================================

\emph{How much geometry does the motion of a charge fix?} Ordinarily a motion of that kind is written on a space already equipped --- a metric, an affine connection, a class of inertial trajectories, a force law adjoining the field to them --- each taken for granted. Which of them does the motion itself fix, which does it select only with the help of a constitutive assumption, and which does it leave open? Let us admit each structure at the point where the description of the motion demands it. Thus, the construction delivers what the motion leaves undetermined.

Among the oldest problems of mechanics, this one is usually posed as a problem about a force. That posing grants more than it appears to. Newton's first law puts the free particle and its uniform motion in a straight line into a single statement, which three centuries of practice have read as one thing. On the one hand straightness is fixed by a law of transport, under which a curve is straight when that law carries the curve's own tangent forward, while, on the other, uniformity is fixed by a clock. With neither given by the manifold, the identification of the two is Corollary~3 of the Newtonian construction~\cite{lm2026newton}, whose hypothesis is the clock. We recognise the question as the one Ehlers, Pirani and Schild put to a spacetime, whose projective and conformal structures they obtained from the worldlines of free particles~\cite{eps1972}. There the motions came first and the geometry after, which is the order we keep. When the motion is parametrised by arc length, force-free and inertial are one and the same condition, in which the force is mass times acceleration. Since away from arc length the two differ by an exact term, the differential of the kinetic energy, a force-free motion need not be inertial. Those tacit assumptions are the ones that make that term vanish --- a connection compatible with the metric, a vanishing torsion, and a clock reading arc length. A free particle does not move inertially of itself. It does so when its clock is right.

Switch the field on, and the trajectories bend, the particle appearing in this sense to be acted upon. However, from the Hamiltonian point of view nothing need act on it --- Sternberg's prescription keeps the kinetic energy for the Hamiltonian while putting the whole of the field into the symplectic form of phase space~\cite{sternberg1977,guillemin1990}. Thus, the energy has neither a potential nor a force term. Indeed, the field passes into the structure. Here we part from the Newtonian treatment at the outset. There the law of transport is the primitive and the force is read against it, whilst here we work in Hamilton's formalism from the beginning, where the force is attached to the Hamiltonian itself.

We now have two descriptions of the same trajectories side by side. Coupling minimally, the canonical treatment adds the potential to the mechanical momentum, after which the energy contains the field while the symplectic form remains the one every problem on that phase space already had. Sternberg's treatment does the reverse, leaving the energy free of the field and twisting the form instead. In both the trajectories agree, the two differing in where the field has been put, with nothing said so far settling the choice. However, one asks less often whether anything settles it. That is the question we take up here. Should it turn out that a charged particle cannot be described otherwise, the impossibility is itself a statement about the geometry. We should like to know what it says.

Feynman asked a version of that question and answered it in brackets. Wanting to see how little a description of motion must assume, he set aside the Lagrangian, the action and the variational principle together. He kept a single non-relativistic particle in three dimensions, and the brackets its position and its velocity make with one another. The three position coordinates commute, and a position coordinate bracketed with the velocity along its own direction returns the reciprocal of the mass, while along any other direction it returns nothing. In coordinates, that is the Legendre relation between velocity and momentum. On those brackets he then imposed the Jacobi identity. The only freedom left was a closed two-form on the configuration space. The force on the particle is settled by that form, given as the Lorentz force together with the source-free Maxwell equations. Feynman never published the argument. Dyson published it after his death, recording that Feynman had gone looking for a new theory and had set the calculation aside once it gave back the physics already in hand~\cite{dyson1990}. It has been reworked many times since. Tanimura carried it to a relativistic form, then to a non-Abelian gauge theory~\cite{tanimura1992}. Hojman and Shepley found in the same brackets a demand for a Lagrangian~\cite{hojman1991}, Bracken recovered the argument through Poisson brackets~\cite{bracken2005}, whilst Cari{\~n}ena and Figueroa took it in the noncommutative direction, where a weaker commutator admits dynamics the standard formalism does not~\cite{carinena2006}. In every telling it is set in flat space, in coordinates, and with a Galilean symmetry its commentators noticed at once. We put it on an arbitrary manifold.

Let us formulate this demand intrinsically on the cotangent bundle of an arbitrary configuration manifold, asking for the symplectic structures that satisfy it for every Hamiltonian at once, which separates the classification itself from the coordinate and bracket formulations that preceded it. Thus, we may ponder on the significance of the following questions. 1. \emph{Which symplectic structures on a phase space deserve to be called kinematic, and what freedom do they leave?} 2. \emph{What, in that freedom, belongs to the particle and not to the geometry?}

Let us say what a symplectic form does by taking a Hamiltonian, a function on the cotangent bundle, together with a path the particle might follow there. Along that path the rate at which the Hamiltonian changes is its differential evaluated on the velocity of the path. That path runs in phase space. We shall call its velocity the phase-space velocity, reserving the plain word for the velocity of the particle. Demanding that the rate vanish --- that the energy be carried unchanged along the motion --- leaves a condition on that phase-space velocity. A symplectic form assembles that condition into a system of equations, pairing the differential of the energy with a phase-space velocity and returning Hamilton's equations. In the conservative case those are Newton's Second Law.

Having two halves, the phase-space velocity so assembled projects to the configuration space as the velocity of the particle, while its component along the fibre is the rate at which the momentum changes. We call the form kinematic when the first of those is the fibre derivative of the energy, that being the relation which makes the fibre coordinate the momentum of the motion. We find the forms meeting that demand to be the canonical one and its twists (Theorem~\ref{theo.kin}). To twist is to add to the canonical form a two-form on the configuration space, lifted to the phase space unchanged, a term in which no momentum appears and which therefore leaves the fibres as the canonical form found them. Twisting by a closed two-form is the whole of the freedom. The demand falls on the structure, the same for every energy.

That two-form answers to three different things at once. The geometry of the cotangent bundle leaves it free among the closed forms, a field is supplied by the configuration, the multiplier between them belonging to the particle. We make that last step as a constitutive declaration, in which the multiplier is the charge. Thus, one energy serves a line of kinematic structures, from which a particle selects the member its own charge fixes. A neutral particle has the one member the geometry distinguishes by itself. In this sense a charged particle in a magnetic field moves inertially.

A charge is called for before any cause has entered, selecting the structure with respect to which the motion is inertial. Two particles of different charge in the same field therefore obey two different kinematics, each of them inertial in its own. Nevertheless, only one number reaches a trajectory, inertial motion recording the two constitutive quantities through their ratio alone. Two particles released alike follow one trajectory when their ratio is the same, and two when it is not. Moreover, since the geometry constrains the multiplier nowhere on a surface, we take the constancy of the charge as a consequence of the invariance of inertial motion.

Symmetry is the next question, and in a construction of this kind the sharpest one. Indeed, a structure is known by the transformations preserving it, which here have to be found before they can be used. Keeping the configuration space in view is the first demand, since a transformation moving the fibres over different points into one another would leave us unable to say where the particle is, and that kinematic structures go to kinematic structures is the second, without which we should leave the setting altogether. Both demands together are met by a diffeomorphism of the configuration space lifted to the phase space, followed by a translation of each fibre by a one-form. That classification we take from Abraham and Marsden~\cite{abrahammarsden1978}.

The physics lies in what that group does to the twisted structures, where the field is moved by a lift, the potential by a translation, whilst a single identity records the effect of both. Out of that identity comes the gauge freedom, those translations preserving a given structure being the ones by a closed one-form, which move a potential while leaving the kinematics as it was. Thus, a gauge transformation arrives here as an element of a group we have classified. Noether's theorem comes next, since the further demand of the free energy leaves the lifted isometries which preserve the field, with the momentum map of the generator for the conserved quantity. That statement is Ikawa's~\cite{ikawa2003,ikawa2007,ikawa2010}, arriving here as a consequence of the classification. Less familiar is a translation by a multiple of the potential, which preserves no structure, taking the kinematics of one charge to that of another and the free energy to the minimally coupled one as it goes. That passage between two charges is therefore a symmetry of the kinematics. The charge and the flux enter through their product. Two charges are so related only where their difference annihilates the class of the field.

Since a velocity does not come with a momentum, something must carry the one into the other. That carrier --- the constitutive morphism --- is the particle's contribution to the dynamics, acting fibre by fibre, sending each velocity at a point to a covector at the same point. In the Newtonian construction it is the musical isomorphism of a bilinear form, which is to say the operation of lowering an index, and a particle settles one thing there, the representative of a homothety class of forms it lowers with~\cite{lm2026newton}. That choice is the mass. Being linear on each fibre, a map of that kind returns a momentum homogeneous of degree one in the velocity.

Thus, the degree decides the matter. Differentiating a momentum of degree one along the flow returns a force of degree two, while a Lorentz force is of degree one. Demanding of its bilinear form neither symmetry nor non-degeneracy~\cite{lm2026newton}, the notion of inertial motion we work with leaves us free to put an antisymmetric part into the form. Inertia does not see it, a form and its symmetric part agreeing wherever both slots are given the same velocity, while the antisymmetric part reaches the force at degree two, the degree at which a torsion enters. For a field we therefore need a morphism with a piece the velocity does not reach, homogeneity leaving nothing for that but a one-form on the base, which a Finsler function of Randers type supplies. Its momentum is the metric one displaced by that form. A motion of constant speed which is force-free for it is the trajectory of a charge, no force having been postulated to make it one. Thus, a field reaches Newton's Second Law at degree one, through the corresponding part of a constitutive morphism.

\emph{What must a law of transport be, if the trajectories of a charge are to be its straight lines?} Inertial motion requires a manifold endowed with a bilinear form together with a law of transport~\cite{lm2026newton}, whilst the affine connections are ruled out~\cite{barros2005gauss}, which retires the first candidate and leaves the transport itself. On the bundle over the configuration space, a law of transport is a horizontal distribution. Among those distributions an affine connection is the one linear on the fibres, a further demand which the construction above has left open. We obtain a distribution of the general kind from a Randers metric, whose departure from the Levi-Civita transport is the Lorentz force multiplied by the speed, hence that of an affine connection only where the field vanishes (Section~\ref{sec.transport}). Thus, we keep the pair inertial motion requires, having left the configuration space for the bundle over it. That pair --- a Finsler metric together with the transport it determines --- is the inertial geometry a moving charge answers to.

Finally we ask what law of transport the motion answers to on the configuration space, where it is not geodesic. Straightness is defined by the propagator carrying a vector along a curve. A curve is straight when the propagator carries its own tangent. Where that propagator is the parallel one of a connection the parameter becomes relevant, inertial motion being then preserved by the affine reparametrisations of the curve, which is the Principle of Inertia of the Newtonian construction~\cite[Thm.~1]{lm2026newton}. Here that symmetry does something it does not do there. The affine maps of the parameter move the field strength together with the speed while fixing their quotient, upon which a transport answering to inertial motion may therefore depend. A metric contains no one-form and cannot do it, while a Finsler metric of Randers type can, since its one-form combines the field and the speed in just that quotient, of which there is one metric for every value. They are inside the twisted structure already, whose Liouville field is the fibrewise dilation centred at the potential. On the free energy level the one-form it leaves there is the Hilbert form of the metric, whose support function is the ball of the free energy, measured from the potential. That the magnetic flow above a critical energy is a Randers geodesic flow is classical~\cite{mane1997,contreras1998}. We add the route by which the metric appears, together with the reason it depends on the energy at all.

Let us say where all of this comes from. It is worth telling properly, since the construction comes from three places at once. Three traditions have taken up this problem over eighty years, in different languages and largely without speaking to one another. One of them is Finslerian, one is symplectic, and one was a piece of Feynman's private reasoning. Each arrived at the same object from a direction the others would not have chosen. Each left a negative result behind it. Those negatives explain why the object had to look as it does. Let us set them out in turn, assuming nothing of the reader but a manifold and a taste for the question.

The oldest opens in 1941, when Randers asked for a length whose shortest curves would be the trajectories of a charge~\cite{randers1941}. A Riemannian length measures a velocity with a norm. A norm is indifferent to the direction the velocity points in. Randers added a term linear in the velocity, which is not. Thus, we have the mildest departure from Riemannian geometry there is. A Finsler metric lets the length of a velocity depend on its direction as well as on the point it lies over. Randers' linear term is a one-form on the manifold, in his hands the electromagnetic potential. The line element was tried again in the unified-field programmes of the nineteen-fifties. There the geodesics of a length with a linear term were shown to describe the Lorentz force~\cite{goenner2014}. Beil took it up once more, building an electrodynamics upon the observation that the fibre derivative of such a length keeps its one-form as a term the velocity does not scale~\cite{beil1987}. Ingarden then gave the metric a law of transport of its own~\cite{ingarden1976}. Under a nonlinear connection a vector is carried along a curve with no promise that carrying two vectors and adding them agrees with adding them and carrying the sum. That is the freedom a Randers metric needs. Miron developed the pairing at length, showing the Lorentz equations to be the autoparallel curves of one such connection~\cite{mironanastasiei1994}.

The second tradition is symplectic and it opens thirty years later. A magnetic field is ordinarily put into mechanics by adding a potential to the energy. Souriau, Sternberg and Weinstein put it somewhere else. They left the energy alone --- the kinetic energy of a free particle --- and added the field to the symplectic form, the structure through which an energy generates a motion at all~\cite{souriau1970,sternberg1977,weinstein1978}. Souriau reached the arrangement through minimal coupling, Sternberg through the reduction of a principal bundle, Weinstein through the geometry of the phase space that results, and Guillemin and Sternberg set it out as geometry in its own right~\cite{guillemin1990}. Moreover, a literature of their own has grown around the twisted cotangent bundles in symplectic dynamics since. We may follow it from Ginzburg, who put the closed trajectories of a charge to symplectic geometry~\cite{ginzburg1996}, through Merry's closed orbits on almost every level of a weakly exact field~\cite{merry2010}, to the symplectic cohomology of these bundles built with no restriction on the twisting form~\cite{gromanmerry2018}. However, in all of that literature the twisting is taken as given. The third tradition is Feynman's, and we have told it above.

Each tradition has a negative result behind it, and the negatives are the more instructive. A skew part in a metric looks like the natural place for a field. Einstein~\cite{einstein1945} and Schr\"odinger~\cite{schrodinger1950} built a theory upon a nonsymmetric metric. Their equations of motion returned no Lorentz force~\cite{goenner2014}. Barros, Cabrerizo, Fern\'andez and Romero proved the companion statement, that on a Riemannian manifold no affine connection has the trajectories of a non-trivial field among its geodesics~\cite[Prop.~2.1]{barros2005gauss}. Nor will a connection hold a field. Between them the two obvious routes are closed. What the traditions above found is what was left.

Twice the first two have met on a fact each can claim. Above a critical value of the energy the flow of a charge in an exact field is the geodesic flow of a Randers metric. Ma\~n\'e identified the value and Contreras, Iturriaga and the Paternains established the correspondence~\cite{mane1997,contreras1998}, while the same threshold was found to govern the symplectic side. There the energy levels are of contact type above the value and, on a surface, only
there~\cite{contreras1998,cieliebak2010}. Since then the correspondence has been extended to stationary media~\cite{gibbons2009} and to magnetic billiards~\cite{tabachnikov2004}. In that literature the dependence of the one-form on the energy is recorded as a feature of it. That correspondence is the work of the authors just named. The classification of the fibre-preserving symplectic maps we take from Abraham and Marsden~\cite{abrahammarsden1978}.

What we add is set out here by its strength, which is not the order the manuscript builds it in.
Sharpest is what an indefinite metric does to a Randers length, where a reverted Cauchy--Schwarz
inequality turns the condition on the potential into a lower bound and the extremals into maxima,
so that the length and the law it was built for come apart (Section~\ref{sec.indefinite}). Next is
the obstruction, which turns on tensoriality before it turns on homogeneity, and which places the
nonsymmetric metric and the affine connection at adjacent points of one account (Proposition~\ref{prop.degreetwo}, Section~\ref{sec.degree}). Third is
the route by which the metric arrives, out of the primitive of the structure the flow already
occupies (Section~\ref{sec.finsler}). Last is the classification, whose familiar half is the
implication and whose claimed half is the equivalence (Theorem~\ref{theo.kin}). The order of
admission is the rest of what is ours. We let each structure wait until the description of the motion cannot proceed without it, we read the multiplier the classification leaves free as the charge, and we take the ratio of the constitutive data for the invariant of inertial motion. Every other ingredient is borrowed, and we say from whom at the point of use. Section~\ref{sec.spacetime} draws the consequence for spacetime, whose content --- a geometry to each charge-to-mass ratio, and an equivalence principle to each family of particles sharing one --- is {\"O}zer's~\cite{ozer1999} and de Matos, {\"O}zer and Izworski's~\cite{dematos2018}. Ours there is the route to it and the conclusion we draw.

The manuscript is structured as follows. Section~\ref{sec.kin} builds the kinematics, taking up each structure at the point of need. It closes with the symmetries the arrangement admits. Section~\ref{sec.dyn} turns to the dual objects and asks what a particle must bring to have a momentum at all. Whether a field can enter Newton's Second Law is decided by the degree of that map, and Section~\ref{sec.indefinite} takes the count to an indefinite metric, where the length parts from the dynamics. Section~\ref{sec.free} takes up the freedom that remains. There we find the law of transport and the Randers metric that goes with it, attributing what is classical in both. Section~\ref{sec.sphere} then works the construction through on the round sphere in two fields, told apart by their flux. In the second of them the charge is quantised. In Section~\ref{sec.spacetime} we ask what a structure must do to be called spacetime. We find that these structures do not do it, on the criterion we take from Knox~\cite{knox2011,knox2014}. Section~\ref{sec.closing} gathers what we have proved, declared and conceded.

%%============================================================
\section{Kinematics}\label{sec.kin}
%%============================================================

Let us build the kinematics before any dynamics is in place, admitting each structure at the point where the description of a moving charge demands it. We begin with the objects a momentum needs, put to a symplectic form the one relation which gives a fibre coordinate its title, and find what the demand leaves free. That freedom is the particle's own contribution.

%%------------------------------------------------------------
\subsection{The Legendre demand}\label{sec.legendre}
%%------------------------------------------------------------
Throughout, $M$ is a smooth manifold with a metric $g$, a tangent bundle $TM$ and a cotangent bundle $T^*M$. The projection $\pi : T^*M \to M$ sends a covector to the point beneath it. On $T^*M$ we have the canonical one-form $\lambda_{\rm can}$, whose exterior derivative gives the canonical symplectic form $\omega_0 = -\d\lambda_{\rm can}$.

We write $(x,P)$ for a point of $T^*M$ and $(x,y)$ for a point of $TM$. Here $P$ is a covector at $x$ and $y$ a vector there. Where components are wanted we take a chart $x^i$ on $M$, with Latin indices running from one to its dimension, writing $P_i$ for the components of a covector and $y^i$ for those of a vector. A motion in $M$ is given by a parametrised curve $\gamma : I \subset \Reals \to M$, where $\tau$ is the parameter on $I$ and $\dot\gamma$ the velocity $\d\gamma/\d\tau$. For a vector field $X$, the one-form $X^\flat = g(X,\cdot)$ is its metric dual, $\vert X\vert_g$ its length, $\iiota_X$ the interior product of a differential form with $X$, which contracts its first slot, and $\pounds_X$ the Lie derivative along its flow. Through the inverse metric the same bars serve a one-form, whose metric dual is the vector $A^\sharp = g^{-1}(A,\cdot)$. The two operations invert one another. Where a point has to be named we write $g_x$ for the metric there, so that $g_x$ is the bilinear form $g$ gives on $T_xM$. Each fibre of $TM$ and of $T^*M$ is a vector space, hence a smooth function has an ordinary derivative there, a linear functional on the fibre. That is the fibre derivative. For a function $H$ on $T^*M$ it is written $\partial H/\partial P$, an element of $T_xM$, and for a function $L$ on $TM$ it is $\partial L/\partial y$, an element of $T^*_xM$. Both are taken without a connection. We leave the signature of $g$ open. This section and the next ask nothing of it beyond non-degeneracy, so that $F$ may stand for a magnetic field where $g$ is positive-definite and for the whole electromagnetic field where it is Lorentzian. The degree count of Section~\ref{sec.degree} asks only that the velocity avoid the cone $g(y,y)=0$, and Section~\ref{sec.indefinite} sets out what an indefinite metric changes. Positive-definiteness is first asked at Section~\ref{sec.finsler}, where the Randers function is read as a length.

The energy of a free particle of mass $m$ is
\beq \label{eq.free}
H_0(x,P) = \frac{1}{2m}\, g^{-1}(P,P),
\eeq
where the pairing is the one the inverse metric makes of two covectors. We call it the
\emph{free energy} throughout, and $H_0$ is the only name it is given. Sternberg's prescription keeps that energy while twisting the form~\cite{sternberg1977,guillemin1990},
\beq \label{eq.twisted}
\omega_\beta = \omega_0 + \pi^*\beta,
\eeq
where $\beta$ is a closed two-form on $M$. There is no potential in the energy~\eqref{eq.free}, no force term, nothing that the field contributes---all of it in the structure~\eqref{eq.twisted}. The trajectories of a charge become the inertial motions of a twisted phase space.

Whilst the energy is where one ordinarily builds the interaction, the symplectic form is taken for granted, the same for every problem posed on the same phase space. Here we do the reverse. The energy is untouched, while the form differs from one particle to the next.

The word \emph{free} rests entirely on that assignment of the field to the structure. Fixing what a motion is before any cause is at work, a particle is admitted to a kinematics only through its constitutive data, the properties by which one particle differs from another~\cite{lm2026newton}. On the tangent side the first of those is the mass, which selects one representative of the metric within a class no inertial trajectory resolves. Let us ask instead which structures on $T^*M$ are kinematics at all. Room is left in the answer for a single constitutive quantity.

An energy is made to generate a motion by a symplectic form, the velocity of that motion lying in the base while its rate of change of momentum lies in the fibre. Thus, a kinematics must preserve the relation between the first of these and the energy. That relation alone makes $P$ the momentum of the motion, as opposed to a fibre coordinate with the title. A symplectic form assigns a motion to every energy without any reference to a metric. The fibre of $T^*M$ over a point is then merely a vector space whose elements the form turns into velocities. Should that assignment fail to return the fibre derivative, the velocity of the motion would answer to the structure, taking with it the relation by which a momentum is recognised. In this sense the demand below is the least we can ask of a phase space before calling its motions free.

We call a symplectic form $\omega$ on $T^*M$ \emph{kinematic} when, for every energy $H\in C^\infty(T^*M)$, the base velocity of the motion it generates is the fibre derivative of $H$ --- that is, when
\beq \label{eq.legendre}
\pi_*X_H = \frac{\partial H}{\partial P},
\eeq
where the vector field $X_H$ satisfies Hamilton's equations,
\beq \label{eq.hamvf}
\iiota_{X_H}\omega = \d H.
\eeq The
condition~\eqref{eq.legendre} is the Legendre relation between velocity and momentum, asked of every energy at once. Quantifying over $H$ leaves a condition on $\omega$ by itself. The structures meeting it form a small family, with a single closed two-form on the base telling its members apart.

\begin{theorem}[Kinematic symplectic classification]\label{theo.kin}
Let $M$ be a smooth manifold and let $\omega$ be a symplectic form on $T^*M$. Then $\omega$ is kinematic, in the sense of the Legendre relation~\eqref{eq.legendre} asked of every $H\in C^\infty(T^*M)$, if and only if it is the twisted form~\eqref{eq.twisted} for a unique closed two-form $\beta$ on $M$.
\end{theorem}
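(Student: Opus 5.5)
The plan is to turn the condition ``for every $H$'' into a pointwise linear-algebraic condition on $\sigma = \omega - \omega_0$, then use closedness to push $\sigma$ down to $M$. Throughout I use the convention $\omega_0 = -\d\lambda_{\rm can} = \d x^i\wedge \d P_i$. With that convention, $\iiota_{X}\omega_0 = \d H$ for $X = a^i\partial_{x^i} + b_i\partial_{P_i}$ gives $a^i = \partial H/\partial P_i$. So $\omega_0$ satisfies \eqref{eq.legendre}, and the same computation will serve for the converse.

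\emph{Step 1 (pointwise reduction).} Fix $z\in T^*M$. Every covector $\alpha\in T^*_z(T^*M)$ is $\d H_z$ for some $H\in C^\infty(T^*M)$: take a linear combination of coordinate functions, cut off by a bump function. Write $\omega^\sharp$ for the inverse of $X\mapsto\iiota_X\omega$, and let $V=\ker\pi_*$ be the vertical bundle. Under the identification of $V_z$ with $T^*_xM$, the fibre derivative $\partial H/\partial P$ is just $\d H_z$ restricted to $V_z$. Since $\omega_0$ already satisfies \eqref{eq.legendre}, the kinematic condition becomes
\[
\pi_*\circ\omega^\sharp=\pi_*\circ\omega_0^\sharp \quad\text{at every point.}
\]
Now substitute $\alpha=\iiota_X\omega$ for arbitrary $X$. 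The condition says that $\omega_0^\sharp(\iiota_X\omega)-X$ is vertical, i.e. $\iiota_X\sigma\in\omega_0^\flat(V)$. Because $V$ is Lagrangian for $\omega_0$ (indeed $\iiota_{\partial_{P_i}}\omega_0=-\d x^i$), $\omega_0^\flat(V)$ is exactly the annihilator $V^\circ$. So the condition is equivalent to $\sigma(X,Y)=0$ whenever $Y$ is vertical. In other words $\sigma$ is semibasic, and locally
\[
\sigma=\tfrac12 B_{ij}(x,P)\,\d x^i\wedge \d x^j .
\]
Each step here is an equivalence, so the converse direction is obtained by reading the chain backwards.

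\emph{Step 2 (closedness).} Both $\omega$ and $\omega_0$ are closed, hence $\d\sigma=0$. The $\d P_k\wedge \d x^i\wedge \d x^j$ component of $\d\sigma$ is $\tfrac12\,\partial B_{ij}/\partial P_k$. This forces the $B_{ij}$ to be constant along the fibres, which are vector spaces and therefore connected.

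For a global statement, set $\beta=s_0^*\sigma$, where $s_0$ is the zero section. Fibre-constancy together with semibasicity gives $\sigma=\pi^*\beta$; the cleanest way to see this is to compare $\sigma$ and $\pi^*\beta$ along the fibrewise dilations, or simply to note that both are semibasic with the same fibre-independent coefficients. Closedness of $\beta$ follows from $\pi^*\d\beta=\d\sigma=0$ and the injectivity of $\pi^*$, which holds because $\pi\circ s_0=\mathrm{id}_M$.

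\emph{Step 3 (converse and uniqueness).} Take $\omega=\omega_0+\pi^*\beta$ with $\beta$ closed. It is closed, and it is nondegenerate: in the block form, $\pi^*\beta$ only touches the horizontal–horizontal block while $\omega_0$ pairs horizontal with vertical invertibly. Since $\pi^*\beta$ is semibasic, Step 1 read backwards gives \eqref{eq.legendre}. Uniqueness of $\beta$ follows again from the injectivity of $\pi^*$.

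I expect the main obstacle to be Step 1, namely getting the equivalence exactly right, signs included. It relies on two facts: that $V$ is $\omega_0$-Lagrangian, so that $\omega_0^\flat(V)=V^\circ$, and that the quantifier over all $H$ really does sweep out every covector at every point. I would check the whole step first in coordinates, by solving $\iiota_X\omega=\d H$ with a general block decomposition of $\omega$, and only then state it intrinsically.
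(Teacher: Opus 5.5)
Your proposal is correct and takes essentially the paper's route: your Step~1 is the paper's equivalence between the Legendre demand and the fibrewise agreement $\iiota_V\omega=\iiota_V\omega_0$ for vertical $V$ (your ``$\sigma$ semibasic''), which the paper obtains by contracting Hamilton's equations with $V$ and letting the $X_H$ exhaust each tangent space, and your Step~2 is its descent of $\omega-\omega_0$ to $\pi^*\beta$. The differences are only of execution: you read fibre-constancy off the $\d P_k\wedge\d x^i\wedge\d x^j$ component where the paper uses Cartan's formula $\pounds_V\eta=\d\,\iiota_V\eta+\iiota_V\d\eta=0$, and your explicit use of $\omega_0^\flat(V)=V^\circ$ and of the zero section makes the pointwise reduction and the injectivity of $\pi^*$ a little more transparent than the paper's wording.
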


The demand made of every energy at once thus becomes a condition on the fibres, and leaves free a single two-form on the space where the particle moves.

\begin{proof}
The classification rests on two equivalences, each proved by its own route. A symplectic form is kinematic when it agrees with $\omega_0$ on every vector tangent to a fibre of $\pi$, which is to say when $\iiota_V\omega=\iiota_V\omega_0$ for every such $V$. The second is that fibrewise agreement holds when $\omega$ takes the twisted form~\eqref{eq.twisted}. Let us suppose the fibrewise agreement first, and write $\eta = \omega - \omega_0$, which is closed since both forms are. Every fibre-tangent $V$ has $\iiota_V\eta = 0$ by hypothesis. Therefore, the first term of Cartan's formula
\beq \label{eq.cartaneta}
\pounds_V\eta = \d\,\iiota_V\eta + \iiota_V\d\eta
\eeq
vanishes, while the second vanishes because $\eta$ is closed. Thus, $\eta$ annihilates the fibre directions and stays constant along them. Since the fibres are connected, $\eta$ descends to the pull-back of a unique two-form $\beta$ on $M$. The projection $\pi$ is a surjective submersion, hence $\pi^*$ is injective on forms, and we find $\beta$ closed from the vanishing $\pi^*\d\beta = \d\eta = 0$. Conversely, whenever $\omega$ takes that form for a closed $\beta$, the pull-back $\pi^*\beta$ annihilates every fibre-tangent vector by construction, and the fibrewise agreement holds again.

The fibrewise agreement in turn relates the twist to the Legendre demand itself. Let us take $V$ fibre-tangent and $H$ any energy. Contracting the defining relation~\eqref{eq.hamvf} with $V$ gives us
\beq \label{eq.kinpairing}
\left( \iiota_V\omega \right)(X_H) = -\,\d H(V),
\eeq
while
\beq \label{eq.iotaVcan}
\iiota_V\omega_0 = -\,\pi^*\mu,
\eeq
where $\mu$ is the one-form on $M$ to which $V$ corresponds under the canonical identification of a fibre's tangent space with $T^*_xM$. This evaluates on $X_H$ to minus the pairing of $V$ with the base velocity $\pi_*X_H$. Granted the fibrewise agreement, the two sides of the identity~\eqref{eq.kinpairing} say that the pairing of $V$ with $\pi_*X_H$ equals $\d H(V)$ --- the fibre derivative of $H$ against $V$ --- and since $V$ ranges over every fibre direction, that is the Legendre relation~\eqref{eq.legendre}. Conversely, granted that relation, the same computation gives us $(\iiota_V\omega)(W) = (\iiota_V\omega_0)(W)$ for every $W$ of the form $X_H$. As $\omega$ is non-degenerate and $H$ is arbitrary, those $W$ exhaust the tangent space at each point. The fibrewise agreement follows once more.
\end{proof}

A symplectic form is therefore kinematic when it is the twisted form~\eqref{eq.twisted} for a two-form $\beta$ on the base, closed and uniquely determined by $\omega$. From that identification we take what it relates at its two ends. On the one side is a demand about
motion, that the velocity be the fibre derivative of the energy, while on the other is a closed
two-form on the base. Between them there is nothing to choose. Let us say which half of that is ours. A
form agreeing with $\omega_0$ on the fibres is a magnetic twist, and that much is familiar wherever
twisted cotangent bundles are used~\cite{souriau1970,sternberg1977,weinstein1978}. We add the
equivalence, and with it the demand at the other end of it. Fibrewise agreement constrains the form
by itself, where the Legendre relation constrains the motions the form generates, asked of every
energy at once and on an arbitrary manifold. Theorem~\ref{theo.kin} says that the two conditions
select the same forms. The normalisation matters as much as the shape. Whilst the form $2\omega_0$ is symplectic and its fibres are Lagrangian, the motion it generates has half the velocity the energy prescribes, hence it fails the Legendre demand and lies outside the family.

Feynman's bracket of a position with a velocity, $\delta_{ij}/m$, is the Legendre relation~\eqref{eq.legendre} written in coordinates, where the Jacobi identity does what the closure of $\beta$ does here~\cite{dyson1990}. Theorem~\ref{theo.kin} makes that argument intrinsic, asking the same of a structure over an arbitrary manifold.

%%------------------------------------------------------------
\subsection{The charge}\label{sec.charge}
%%------------------------------------------------------------
Three different claims meet on that two-form. First, the shape of the structure~\eqref{eq.twisted} is fixed by the geometry of $T^*M$, which leaves $\beta$ free among the closed two-forms. The identification narrows that freedom nowhere. Second, the configuration supplies a field $F$. We take that field to be the two-form we twist by --- a restriction we make and mark here beside the declaration which follows it. With it, $\beta$ can only be a multiple of $F$. Moreover, the multiplier belongs neither to the geometry nor to the configuration but to the particle. It is the charge. Writing $q$ for the charge of the particle in question, we post
\beq \label{eq.constitutive}
\beta \posted -\,q\,F,
\eeq
where we shall use $\posted$ to denote a constitutive equality~\cite{lm2026newton}. The sign is a matter of convention. Since the interior product contracts the first slot, the multiplier which returns the Lorentz force of a charge $q$ is $-q$. Theorem~\ref{theo.kin} derives neither a charge nor a conservation law for one. It shows only that the kinematics leaves a particle-dependent multiplier free. We identify that multiplier with electric charge as a constitutive declaration. That identification is no consequence of the classification. Every statement below is proved for an arbitrary real $q$, so that the mathematics is independent of the interpretation the declaration assigns it.

The three cannot be collapsed into one. Were the multiplier fixed by the geometry, every particle in a given field would follow one family of trajectories, and no two charges could ever be told apart by their motion. Were it fixed by the configuration, the charge would belong to the field, so that a particle would have no property of its own to bring. The arrangement below is therefore the least that lets a particle differ from another particle in the same field.

Let $F$ be a closed two-form on $M$, not identically zero, and for $q\in\Reals$ set
\beq \label{eq.family}
\omega_q = \omega_0 - q\,\pi^*F.
\eeq
Each $\omega_q$ is a kinematic symplectic form on $T^*M$, the assignment $q\mapsto\omega_q$ is injective, and the Hamiltonian flow of the free energy~\eqref{eq.free} with respect to $\omega_q$ projects to the solutions of
\beq \label{eq.lorentz}
\frac{\d}{\d\tau}\left( m\,\dot\gamma^\flat \right) - \d_x T
 = -\,q\ \iiota_{\dot\gamma}F,
\quad \text{with} \quad
T = \tfrac{m}{2}\,g(\dot\gamma,\dot\gamma),
\eeq
Here, the derivative $\d_x T$ is taken with respect to the position coordinates at fixed velocity components, as the Euler--Lagrange expression requires. In them we recognise the trajectories of a particle of charge $q$ and mass $m$ in the field $F$. Taking flat space with $F = B\,\d x\wedge\d y$ and a velocity along $x$, the right-hand side gives $-qvB$ in the $y$ direction, which is the conventional $q\,v\times B$.

Each $\omega_q$ is kinematic by the identification above, since $F$ is closed. It is symplectic for a reason a chart makes plain. In the adapted coordinates $(x^i,P_i)$, with the
position directions taken before the fibre ones, its matrix is
$\left(\begin{smallmatrix} -q\,F_{ij} & I \\ -I & 0 \end{smallmatrix}\right)$. Exchanging the two
blocks of rows leaves a block-triangular matrix with $-I$ and $I$ down the diagonal, and the sign
of the exchange cancels the sign of that determinant, so the determinant is unity in every
dimension and for every $F_{ij}$. Non-degeneracy therefore holds on the whole of $T^*M$, for every
closed two-form and every charge. The difference of two members is
\beq \label{eq.memberdiff}
\omega_q - \omega_{q'} = -\,(q-q')\,\pi^*F,
\eeq
and $F$ does not vanish identically, though it may vanish on a subset without harm, and the assignment is therefore injective.

For the flow, the Legendre relation already gives us the base slots,
\beq \label{eq.baseslots}
\dot\gamma = g^{-1}(P,\cdot)/m.
\eeq
Thus, the momentum of the motion is
\beq \label{eq.flowmomentum}
P = m\dot\gamma^\flat.
\eeq In the fibre slots we match $\iiota_X\omega_q = \d H_0$ and are left with
\beq \label{eq.fibreslots}
\dot P_i = -\partial_i H_0 + q\,F_{ij}\dot\gamma^j.
\eeq
The first term is the derivative of the free energy~\eqref{eq.free} at fixed momentum, and substituting the momentum~\eqref{eq.flowmomentum} into it returns $\partial_i H_0 = -\d_x T$ evaluated at fixed velocity, since the inverse metric differentiates with the opposite sign to the metric, $\partial_k g^{ij} = -g^{ia}g^{jb}\,\partial_k g_{ab}$. The second is the interior product $-\,q\,\iiota_{\dot\gamma}F$ written in components. Collecting the two, we obtain the law~\eqref{eq.lorentz}.

Through the ratio $q/m$ the charge enters the flow~\eqref{eq.lorentz}, while it enters the structure through $q$. Thus, switching the field off becomes a change of structure, leaving us the one belonging to the cotangent bundle itself. Indeed, setting $q=0$ in that family returns $\omega_0$ and annihilates the right-hand side of the law~\eqref{eq.lorentz}, which leaves the Euler--Lagrange equations of the kinetic energy --- the geodesic equations of $g$. The kinematic structure of a neutral particle is therefore the canonical form itself, whose inertial motions are the geodesics of the metric.

One energy now serves a line of kinematic structures, one for every charge, the member being taken by a particle as its own charge selects. \emph{In this sense a charged particle in a magnetic field moves inertially.} Its trajectories are the inertial motions of that member, while the neutral particle is left with the one which the geometry of $T^*M$ distinguishes on its own. Indeed, we may put the matter as a question of where the interaction has been put. We have added nothing to the energy, and within the kinematics so built nothing acts. The change is in the structure through which the energy generates a motion.

The parallel with the mass is close, though the two enter at different places. Both are constitutive data. The mass enters the passage from velocity to momentum, the passage a cause acts through, while the charge enters a symplectic structure settled before any cause is at work. The free energy~\eqref{eq.free} shows the two apart. The mass appears in it, while the charge enters only through the twist. Thus, how a particle answers a force is fixed by its mass, and what inertial motion is for it by its charge.

Let us put one more question to the two data, this one about what a trajectory keeps of them. Since a charge has its own kinematic structure while the free energy~\eqref{eq.free} gives a mass its own energy, a trajectory might have been expected to record both. One number reaches it. Two particles released at the same speed from the same point in the same direction, in a field $F$ which does not vanish there, follow the same trajectory if and only if they carry the same ratio of charge to mass. Indeed, the law~\eqref{eq.lorentz} divided by $m$ depends on the charge and the mass through $q/m$. Two particles sharing that ratio solve one system with one set of initial data. Conversely, distinct ratios give distinct right-hand sides at the initial point, where $\iiota_{\dot\gamma}F$ does not vanish. The accelerations differ there, separating the trajectories.

Free motion does not record the charge and the mass separately, only their ratio. Thus, the constitutive data of a particle reach its trajectory through a single number. Were that number common to every body, one family of inertial motions would serve them all, the geodesic flow of a single metric. Here the ratio runs over the reals, with a family of its own to each value it takes. \emph{The structures of this manuscript are the universality classes that an equivalence principle, where one holds, reduces to one.}

The freedom in $\beta$ leaves one more question open, this one concerning the charge itself. Closure is all the classification asks of $\beta$, and under the declaration~\eqref{eq.constitutive} it becomes $\d q\wedge F = 0$, since $F$ is closed already. What the condition permits is settled by the rank of the field, while the dimension of $M$ does not enter. Where $F$ has rank two the solutions are the one-forms annihilating its kernel. Then $q$ is constant along that kernel and free in the two directions transverse to it. Where the rank is four or more, only $\d q = 0$ solves the condition and the multiplier is locally constant. A non-zero field on a surface has rank two and trivial kernel, hence nothing is asked of the multiplier there and it may be any function of position. By itself, then, the geometry forces constancy only from rank four upwards. A position-dependent multiplier is allowed by the kinematic classification below that, though interpreting such a multiplier as the charge of a single particle would require a further constitutive principle. We impose constancy as part of the particle interpretation, following the invariance requirement discussed in~\cite{lm2026newton}.

%%------------------------------------------------------------
\subsection{The symmetries}\label{sec.sym}
%%------------------------------------------------------------
The two demands are settled before we look for anything, and they are settled by what a kinematics is. Since a kinematics singles out the fibration of $T^*M$ over $M$, a map worth the name must cover a map of the base whilst keeping kinematic structures kinematic. By the first demand we keep the configuration space in view, without which we could not say where the particle is. By the second we keep the class of structures in view, since a map taking a kinematic form outside the class has left the setting behind.

Let us see what room such a map has, since we shall find the answer short. A point of $T^*M$ is given as a place together with a covector there, so a map respecting the fibration may move the place, or it may leave the place where it is and move the covector. That leaves us a family of maps for each course. A diffeomorphism $\phi$ of $M$ moves the place, lifting to $T^*M$ by transposing its inverse differential. The lift $L_\phi$ preserves the canonical one-form $\lambda_{\rm can}$ outright, being built from nothing else. A one-form $\mu$ on $M$ moves the covector, determining the fibre translation ${\rm t}_\mu$,
\beq \label{eq.translation}
{\rm t}_\mu(x,P) = \left(x,\ P + \mu(x)\right),
\eeq
which adds a fixed covector at each point and covers the identity, so that every fibre is slid along itself whilst every particle stays where it was. Where the momentum is measured from is what such a map changes.

\begin{proposition}[Symmetries of a kinematic structure]\label{prop.sym}
Let $q\in\Reals$ and let $\Psi$ be a diffeomorphism of $T^*M$ covering a diffeomorphism $\phi$ of $M$ and keeping kinematic structures kinematic. Then $\Psi$ is the composite of a lift and a translation, $\Psi = L_\phi\circ{\rm t}_\mu$ for a unique one-form $\mu$ on $M$. Its action on a twisted structure is
\beq \label{eq.symtwist}
\Psi^*\omega_q = \omega_q - \pi^*\d\mu - q\,\pi^*\!\left( \phi^*F - F \right).
\eeq
\end{proposition}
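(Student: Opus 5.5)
The plan is to reduce to maps covering the identity by stripping off the lift, and then to read the fibre translation directly from the fibrewise-agreement criterion established in the proof of Theorem~\ref{theo.kin}. The first step is to record two facts about $L_\phi$. Since $L_\phi^*\lambda_{\rm can}=\lambda_{\rm can}$, it follows that $L_\phi^*\omega_0=\omega_0$. Since $\pi\circ L_\phi=\phi\circ\pi$, it also follows that $L_\phi^*\pi^*\beta=\pi^*\phi^*\beta$. Hence $L_\phi^*\omega_\beta=\omega_{\phi^*\beta}$, so lifts, and their inverses $L_{\phi^{-1}}$, keep kinematic structures kinematic.

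Now let $\Phi=L_\phi^{-1}\circ\Psi$. This map covers the identity and still keeps kinematic structures kinematic. In a chart we write $\Phi(x,P)=(x,f(x,P))$ with components $f_i$. We apply the hypothesis to $\omega_0$ itself. Then $\Phi^*\omega_0$ is kinematic, and by Theorem~\ref{theo.kin} it agrees with $\omega_0$ on every fibre-tangent vector. In coordinates we have
\[
\Phi^*\omega_0 = \d x^i\wedge\d f_i = \d x^i\wedge\left(\partial_j f_i\,\d x^j + \frac{\partial f_i}{\partial P_k}\,\d P_k\right).
\]
Contracting with $\partial/\partial P_k$ gives $-(\partial f_i/\partial P_k)\,\d x^i$. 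Fibrewise agreement requires this to equal $\iiota_{\partial/\partial P_k}\omega_0=-\d x^k$. Hence $\partial f_i/\partial P_k=\delta_i^k$. Each fibre is a connected vector space, so $f_i=P_i+\mu_i(x)$. The functions $\mu_i$ are the components of a one-form, because $\mu(x)=f(x,P)-P$ is a difference of two covectors at $x$. Therefore $\Phi={\rm t}_\mu$ and $\Psi=L_\phi\circ{\rm t}_\mu$. Uniqueness is immediate: $\phi$ is determined by $\Psi$, and then so are $\Phi$ and $\mu$.

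For the action~\eqref{eq.symtwist}, note first that ${\rm t}_\mu^*\lambda_{\rm can}=\lambda_{\rm can}+\pi^*\mu$, which is a pointwise check of the tautological property. With $\omega_0=-\d\lambda_{\rm can}$ this gives ${\rm t}_\mu^*\omega_0=\omega_0-\pi^*\d\mu$. Since $\pi\circ{\rm t}_\mu=\pi$, we also have ${\rm t}_\mu^*\pi^*F=\pi^*F$. Combining these facts with the first paragraph,
\[
\Psi^*\omega_q={\rm t}_\mu^*L_\phi^*\left(\omega_0-q\,\pi^*F\right)={\rm t}_\mu^*\left(\omega_0-q\,\pi^*\phi^*F\right)=\omega_0-\pi^*\d\mu-q\,\pi^*\phi^*F.
\]
Rewriting $\omega_0$ as $\omega_q+q\,\pi^*F$ yields the stated identity.

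I expect the main obstacle to be interpretive rather than computational. One must fix what ``keeping kinematic structures kinematic'' means; I read it as requiring $\Psi^*\omega$ to be kinematic whenever $\omega$ is, and I use it only for $\omega=\omega_0$. One must also keep the sign conventions of $\omega_0=-\d\lambda_{\rm can}$ and of the interior product consistent, so that the fibre-slot comparison really forces the identity matrix rather than minus it.
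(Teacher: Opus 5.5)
Your proof is correct and follows essentially the paper's route. Both arguments strip off the lift to get a map over the identity, use Theorem~\ref{theo.kin} to force its fibre derivative to be the identity so that it is a translation ${\rm t}_\mu$, and read off~\eqref{eq.symtwist} from ${\rm t}_\mu^*\lambda_{\rm can}=\lambda_{\rm can}+\pi^*\mu$ and $L_\phi^*\pi^*F=\pi^*\phi^*F$; the differences are cosmetic, since you apply the hypothesis to $\omega_0$ and obtain $\partial f_i/\partial P_k=\delta_i^k$ by contracting with vertical vectors, whereas the paper applies it to $\omega_q$ and obtains the same condition from the basicness of $\d(\Theta^*\lambda_{\rm can}-\lambda_{\rm can})$.
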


By the proposition those two families exhaust the symmetries, every one of them being a translation followed by a lift. Let us read the identity~\eqref{eq.symtwist} before proving it, since its right-hand side is built of one term for each thing a symmetry can do. The first is $\omega_q$ itself, the structure returned unchanged. The second, $-\pi^*\d\mu$, is the contribution of the translation, depending on $\mu$ only through its exterior derivative, so a translation by a closed one-form leaves the structure untouched. The third, $-q\,\pi^*(\phi^*F - F)$, is the contribution of the map of the base, vanishing for a $\phi$ which carries the field to itself. Everything below is obtained by asking when those two corrections vanish, first one at a time, then together.

\begin{proof}
The composite $\Theta = L_\phi^{-1}\circ\Psi$ covers the identity, fixing each fibre setwise, and we may write $\Theta(x,P) = (x,\Xi(x,P))$. Since $L_\phi^*\lambda_{\rm can} = \lambda_{\rm can}$ and $\pi\circ\Theta = \pi$, the difference $\Theta^*\lambda_{\rm can} - \lambda_{\rm can}$ evaluates on a vector $w$ to $\left(\Xi(x,P) - P\right)(\pi_*w)$, a one-form annihilating every fibre direction, whose exterior derivative is $-\left(\Theta^*\omega_0 - \omega_0\right)$. Now
\beq \label{eq.psitheta}
\Psi^*\omega_q = \Theta^*\omega_0 - q\,\pi^*\phi^*F,
\eeq
by virtue of the invariance of the canonical one-form under $L_\phi$ and of $\Theta$ covering the identity. Thus, the hypothesis together with the kinematic identification above leaves $\Theta^*\omega_0 - \omega_0$ basic. A horizontal one-form has basic exterior derivative only when its coefficients are constant along the fibres, hence $\Xi(x,P) = P + \mu(x)$ and $\Theta = {\rm t}_\mu$, where the one-form $\mu$ is recovered from $\Theta$ on the zero section. The identity~\eqref{eq.symtwist} then follows from the two pull-backs
\beq \label{eq.translpull}
{\rm t}_\mu^*\lambda_{\rm can} = \lambda_{\rm can} + \pi^*\mu
\quad \text{and} \quad
L_\phi^*\pi^*F = \pi^*\phi^*F.
\eeq
\end{proof}

Since $\pi^*$ is injective on forms, a correction vanishes where the form beneath it does, and we may put each of the two questions separately. The first asks what preserves the structure alone. Thus, $\Psi$ preserves $\omega_q$ if and only if
\beq \label{eq.presomega}
\d\mu = -\,q\left(\phi^*F - F\right),
\eeq
which for a $\phi$ preserving the field reduces to $\d\mu = 0$. The second asks what preserves the structure and the energy at once, and it is answered far more narrowly. For $q\neq 0$, $\Psi$ preserves $\omega_q$ and the free energy~\eqref{eq.free} together if and only if $\mu = 0$ and $\phi$ is an isometry of $g$ preserving $F$. The gap between the two demands is the whole of what follows, since two charges cannot be told apart by the structure alone, whilst by the structure with its energy they can. Let us see why the energy is so much the more demanding. The free energy~\eqref{eq.free} pairs a covector with itself, whilst a translation adds to that covector. Composing the free energy with a translation gives
\beq \label{eq.transenergy}
H_0\circ{\rm t}_\mu - H_0 = \frac{1}{m}\,g^{-1}\!\left(P,\mu\right) + \frac{1}{2m}\,g^{-1}\!\left(\mu,\mu\right),
\eeq
whose first term is linear in $P$, vanishing at every point of every fibre only for $\mu = 0$, since a one-form has nowhere to hide from a covector allowed to range over a whole fibre. With $\mu = 0$ the remaining freedom is the lift, which carries a covector $P$ at $x$ to the covector $P\circ\d\phi^{-1}$ at $\phi(x)$, returning the free energy as the same pairing taken there,
\beq \label{eq.liftenergy}
H_0\circ L_\phi\left(x,P\right) = \frac{1}{2m}\,g_{\phi(x)}^{-1}\!\left(P\circ\d\phi^{-1},\ P\circ\d\phi^{-1}\right).
\eeq
That agrees with $H_0(x,P)$ for every covector precisely where $\phi^*g = g$, which is to say where $\phi$ is an isometry. The first of the two consequences then asks in addition that $\phi^*F = F$.

The ingredients of that argument are classical. A symplectic diffeomorphism of $T^*M$ fibred over the identity is a translation by a closed one-form, while a map preserving the canonical one-form is a lift, both of them in Abraham and Marsden~\cite{abrahammarsden1978}. We add the assembly for kinematic structures, together with the identity~\eqref{eq.symtwist}, which draws together what we have until now been treating apart.

The gauge freedom comes first. Among the translations, those preserving $\omega_q$ are the ones by a closed one-form. On a configuration space where every closed one-form is exact the subgroup consists of the ${\rm t}_{\d\nu}$, each of which leaves $\omega_q$ untouched while moving any primitive of it by $-\d\nu$. In this sense the freedom in a potential is the fibre-translating part of the symmetry group of the kinematics, fixing the structure while moving the primitive. That differs from the customary account. Usually a gauge transformation is introduced as a redundancy in a description, to be quotiented away before anything physical is said. Here it arrives as one member of the symmetry group above, moving the primitive, precisely the object a constitutive morphism must have if a field is to enter the dynamics. The second is Noether's theorem for this system, and it is Ikawa's~\cite{ikawa2003,ikawa2007,ikawa2010}. The maps preserving the kinematics together with the energy are the lifted isometries preserving the field, whose conserved quantity is the momentum map of the generator. Ikawa proved that extension for the motion of a free charge, which is not geodesic, and we used it to integrate the trajectories in the monopole field~\cite{lm2025monopole}. Here the same statement arrives as the second of the two consequences, so that a theorem about a charge in a field is obtained from a classification which mentions neither. Thus, the two demands of a symmetry, taken together with the energy, leave us the invariances one would have looked for in the first place.

The first of the two consequences just drawn leaves out the more interesting case, since a translation by a one-form which is \emph{not} closed carries one kinematic structure to another. Therefore, we are no longer looking at a symmetry of a single kinematics but at a passage between two of them.

Let $F = \d A$. Then ${\rm t}_{-(q-q')A}$ is a fibre translation carrying $\omega_{q}$ to $\omega_{q'}$, and the free energy~\eqref{eq.free} to the minimally coupled energy
\beq \label{eq.mincoupling}
H_0\circ{\rm t}_{-(q-q')A} = \frac{1}{2m}\, g^{-1}\!\left( P - (q-q')A,\ P - (q-q')A \right).
\eeq
Indeed, by the identity~\eqref{eq.symtwist} with $\phi$ the identity, ${\rm t}_\mu^*\omega_q = \omega_q - \pi^*\d\mu$. Now $\mu = -(q-q')A$ gives $\d\mu = -(q-q')F$, hence $\omega_q + (q-q')\pi^*F = \omega_{q'}$. The energy is the pairing of $P$ with itself. The translation ${\rm t}_\mu$ replaces $P$ by $P+\mu$.

That translation exists because the field is exact. The general condition is one on the class $\left[F\right] \in H^2(M;\Reals)$ of the field. Let $\Psi$ cover $\phi$ and carry $\omega_q$ to $\omega_{q'}$. Then
\beq \label{eq.classpair}
q'\left[ F \right] = q\ \phi^*\!\left[ F \right]
\eeq
in $H^2(M;\Reals)$. Thus, a fibre translation carries $\omega_q$ to $\omega_{q'}$ if and only if
\beq \label{eq.classcondition}
(q-q')\left[F\right] = 0.
\eeq
On a closed oriented surface, where the class of a two-form is its flux, with pull-back multiplying that flux by the degree, a field of non-zero flux leaves $q' = \deg(\phi)\,q$. Since $\omega_{q'}$ is kinematic, $\Psi$ meets the hypothesis of Proposition~\ref{prop.sym}. Together with $\omega_{q'} = \omega_0 - q'\pi^*F$, the symmetry identity~\eqref{eq.symtwist} then gives
\beq \label{eq.classcompare}
\pi^*\!\left(q'F\right) = \pi^*\!\left( q\,\phi^*F + \d\mu \right).
\eeq
Injectivity of $\pi^*$ on forms gives $q'F = q\,\phi^*F + \d\mu$. Passing to cohomology annihilates the exact term, leaving the pairing~\eqref{eq.classpair}. For $\phi$ the identity that condition is the vanishing~\eqref{eq.classcondition}. Conversely, when it holds, $(q-q')F$ is exact and the translation by any primitive of $-(q-q')F$ carries $\omega_q$ to $\omega_{q'}$ by the same identity, of which the translation just constructed is the case $F = \d A$.

On a closed oriented surface we may make all of this arithmetic, since the class of a two-form is fixed there by its integral. The pull-back of a top-degree form multiplies that integral by the degree of the map,
\beq \label{eq.degreeflux}
\oint_M \phi^*F = \deg(\phi)\oint_M F ,
\eeq
which is a statement about $\phi$ and holds of any two-form whatever, the field entering it only as the thing being integrated. Thus, the flux is left where it was by an orientation-preserving diffeomorphism, which returns $q' = q$, whilst under a reversal of orientation $q$ is carried to $-q$, the antipodal map of the sphere being the example, its degree $-1$.

In this sense the charge and the flux are dual, since the geometry keeps their product. A charge is fixed only up to the annihilator of the class $\left[F\right]$, while the flux measures how much of the charge the symplectic structure retains. Where the flux vanishes we have $q\left[F\right] = 0$ at every charge and the structure retains none of it. A field of non-zero flux separates every pair of charges outright.

However, the collapse does not take the charge with it. The translation identifying two charges moves the field out of the structure and into the energy, hence it is a symmetry of $\omega_q$, while by the second of those consequences only $\mu = 0$ preserves the free energy. Thus, the line of kinematic structures collapses for the structure by itself, whilst for the structure taken together with the energy it generates it does not. The charge belongs to that pair. The freedom in the multiplier found on a surface has its companion here, for the geometry of a surface holds the charge fixed nowhere, while a vanishing flux leaves us to tell one charge from another through the energy.

Thus, the kinematic structures of a charged particle and a neutral one are symplectomorphic, where the map between them moves the field from the structure into the energy. We keep the free energy throughout and choose a primitive to do it. The other choice requires exactness at the outset, since the energy is then built from a potential, where the twisted structure gives the motion whether the field has one or not. It also gives up the word \emph{free}, for an energy which contains the field is free in no sense. The charge would then enter where the mass enters, losing the distinction drawn above between the charge and the mass.

%%============================================================
\section{Dynamics}\label{sec.dyn}
%%============================================================

The kinematics settles what a phase space must be while saying nothing of what a particle brings to it. Let us turn to the dual objects, where a velocity is carried to a momentum by a map the particle supplies. The degree of that map decides what a law of motion can contain, which is where a field must enter.

%%------------------------------------------------------------
\subsection{The constitutive morphism}\label{sec.morphism}
%%------------------------------------------------------------
\emph{What must a particle bring in order to be given a momentum?} The Newtonian construction answers it with a single map~\cite{lm2026newton}. It requires of a manifold a bilinear form $b$ and a law of transport. A motion is inertial there when it is rectilinear for the transport --- autoparallel for it --- and uniform for $b$, which then takes one value along the motion. Of the form itself it requires remarkably little, since neither symmetry nor non-degeneracy is demanded. We take that freedom seriously, finding in it a consequence left unpursued in~\cite{lm2026newton}. Writing $b = b_{\rm s} + \Phi$ for the symmetric and antisymmetric parts, we see that uniformity reaches only the first of them. Indeed, $b(y,y) = b_{\rm s}(y,y)$ for every tangent vector $y$. The kinematics is blind to $\Phi$. However, the momentum is not, being the constitutive dual of the velocity under the whole of the form. Something invisible to inertia is thereby visible to the dynamics.

A particle brings a \emph{constitutive morphism}, a fibrewise map $\chi : TM \to T^*M$ over the identity. The momentum of a motion $\gamma$ we post to be
\beq \label{eq.postmomentum}
p \posted \chi(\dot\gamma),
\eeq
the second of our two constitutive declarations. A force is a measure of how the momentum fails to be preserved by the flow of the motion, written as the derivative along that flow,
\beq \label{eq.dpdtau}
\frac{\d p}{\d\tau} \ \equiv\ \pounds_{\dot\gamma}\, p.
\eeq
We call the morphism \emph{variational} when $\chi = \partial L/\partial y$ for a function $L$ on $TM$.

We measure that failure with the Lie derivative. Measuring it with a connection would presuppose the coupling Newton's Second Law exists to establish~\cite{lm2026newton}, which is why we need the velocity in a neighbourhood and the motions fill a region. Moreover, for the morphism $\chi = m\,g^\flat$ of a symmetric form Newton's Second Law takes the form
\beq \label{eq.nsl}
\frac{\d p}{\d\tau} - m\,a^\flat = \tfrac12\,\d\!\left( \iiota_{\dot\gamma} p \right),
\eeq
where $a$ is the acceleration the Levi-Civita connection assigns and the exact term vanishes at constant speed. The identity is elementary, taken from~\cite{lm2026newton} unchanged. Metricity and the vanishing of torsion give
\beq \label{eq.metricdual}
\pounds_{\dot\gamma}\,(g^\flat\dot\gamma) = \left(\nabla_{\dot\gamma}\dot\gamma\right)^\flat
+ \tfrac12\,\d\, g(\dot\gamma,\dot\gamma),
\eeq
while the pairing $\iiota_{\dot\gamma}p$ is $m\,g(\dot\gamma,\dot\gamma)$.

The antisymmetric part now goes to work. Since it is invisible to inertia, anything it brings to Newton's Second Law it brings to the force alone.

\begin{proposition}[Obstruction to a Lorentz force]\label{prop.degreetwo}
Let the constitutive morphism be $\chi = m\,(b_{\rm s} + \Phi)^\flat$, with $b_{\rm s}$ symmetric and $\Phi$ a two-form on $M$, constant therefore along the fibres. Then the contribution of $\Phi$ to the force depends on the derivatives of the velocity field
across the congruence, and is of degree two. The field and the
velocity at a point fix a Lorentz force, which is homogeneous of degree one. No choice of $\Phi$
returns one, and the antisymmetric part of a bilinear constitutive morphism therefore gives no
Lorentz force.
\end{proposition}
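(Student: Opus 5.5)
The plan is to compute the force directly from the definition~\eqref{eq.dpdtau}, split along $b = b_{\rm s} + \Phi$. The symmetric part is handled by the identity~\eqref{eq.nsl}, so everything hinges on $\pounds_{\dot\gamma}(m\,\Phi^\flat\dot\gamma)$. Write $v$ for the velocity field of the congruence, so that $v = \dot\gamma$ along each motion, and set $\alpha = \iiota_v\Phi$. Cartan's formula then gives
\beq \label{eq.phiforce}
\pounds_v\,\iiota_v\Phi = \iiota_{[v,v]}\Phi + \iiota_v\pounds_v\Phi = \iiota_v\,\d\,\iiota_v\Phi + \iiota_v\iiota_v\d\Phi,
\eeq
and the last term vanishes by antisymmetry. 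The contribution of $\Phi$ is therefore $m\,\iiota_v\d(\iiota_v\Phi)$.

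In components this reads $m\,v^j\left[\partial_j(\Phi_{ki}v^k) - \partial_i(\Phi_{kj}v^k)\right]$. Splitting it gives two kinds of term. The first is $m\,v^jv^k(\partial_j\Phi_{ki} - \partial_i\Phi_{kj})$, which is quadratic in $v$. The second is $m\,v^j(\Phi_{ki}\partial_j v^k - \Phi_{kj}\partial_i v^k)$, which involves the transverse derivatives $\partial_i v^k$ of the velocity field. That is the dependence across the congruence claimed in the statement.

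Next comes homogeneity. Under $v\mapsto\lambda v$ with constant $\lambda$, which is the affine reparametrisation of the whole congruence, every term scales by $\lambda^2$, so the contribution is of degree two. A Lorentz force $-q\,\iiota_v F$ is pointwise in $v$ and scales by $\lambda$. To compare the two, suppose the $\Phi$-contribution equals $-q\,\iiota_vF$ for every congruence. Scaling by $\lambda$ then forces $\lambda^2 X = \lambda X$ for all $\lambda$, so both sides vanish, and hence $\iiota_vF = 0$ for all $v$, which gives $F = 0$.

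One might worry that the $\Phi$-term combines with the symmetric part to produce a degree-one remainder. It cannot. The symmetric contribution from~\eqref{eq.nsl} is also degree two, since $\nabla_v v$ and $\d\,g(v,v)$ both scale by $\lambda^2$. The Lorentz force is therefore absent from the whole of Newton's Second Law.

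The main obstacle I expect is making the degree-two claim rigorous given the dependence on derivatives of $v$. The force is not a function of the pointwise velocity alone, so "homogeneous of degree two" must be read as a statement under constant rescaling of the velocity field. I would state that reading explicitly and note that a Lorentz force, being tensorial in the pointwise velocity, is also fixed pointwise. Then there is a second mismatch besides the degree: the $\Phi$-contribution is not tensorial, because the terms in $\partial_i v^k$ can be changed by varying the congruence while keeping $v$ fixed at the point, whereas $\iiota_vF$ cannot. This gives an independent argument that no choice of $\Phi$ matches $-q\,\iiota_vF$.
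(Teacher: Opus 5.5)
Your computation is the paper's. You reach $m\,\iiota_v\d(\iiota_v\Phi)$ through $\pounds_v\Phi$ rather than through Cartan's formula on the one-form $\iiota_v\Phi$, which makes no difference. Your two arguments, the degree count and the dependence on the congruence, are also the paper's two. But you have weighted them the other way round, and that matters. Your lead argument needs the identity with $-q\,\iiota_vF$ to hold for $\lambda v$ with the same $q$ and $F$, that is, at every speed at once. That proves the uniform statement cleanly. The paper, however, holds the speed fixed throughout, and then the rescaled congruence is not admissible.

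Nor is the uniform reading the relevant one here. The Randers morphism of Section~\ref{sec.degree} returns a Lorentz force only on motions of constant speed. Its spray~\eqref{eq.spray} carries the field as a Lorentz force multiplied by the speed, a term of degree two that is nevertheless a Lorentz force at every fixed speed. Section~\ref{sec.ratio} shows that the affine reparametrisations scale $qF/m$ together with the speed rather than holding $F$ fixed. A count of degrees cannot tell that term from yours. What separates them is that the Randers term is fixed by the velocity at the point and the $\Phi$-term is not. That is why the paper makes non-tensoriality the decisive argument and treats the degree count as a consistency check.

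So your last paragraph has to carry the proof, and as written it only asserts its key step. You can complete it by varying the first jet of $v$ at $x$ alone. With $y = v(x)$, replace $\partial_jv^k$ by $\partial_jv^k + z^k\xi_j$, where $\xi(y)=0$ and $g(z,y)=0$. This leaves the value at $x$ untouched. The first derivative of $g(v,v)$ at $x$ changes by $2g(y,z)\,\xi = 0$, so for a congruence of constant speed, renormalising to that speed does not alter the jet at $x$. Your component formula then changes by $m\,(\iiota_y\Phi)(z)\,\xi$. Because $\iiota_y\Phi$ annihilates $y$, it cannot also annihilate $y^\perp$ (for $y$ off the null cone) unless it vanishes. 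Hence a suitable $z$ exists for any $y$ with $\iiota_y\Phi\neq 0$, and such $y$ exist wherever $\Phi\neq 0$.

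The paper makes the same point on a surface. With $\Phi=\phi\,{\rm Vol}_g$ the contribution is the divergence of $\phi v$ times the velocity turned through a right angle. That divergence is free among fields of one speed with a prescribed value at the point.

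One small correction to your split: $\Phi_{ki}v^j\partial_jv^k$ is a derivative along the flow, the acceleration of the curve itself. The transverse dependence sits in $-\Phi_{kj}v^j\partial_iv^k$.
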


\begin{proof}
The momentum exceeds the symmetric one by $m\,\iiota_{\dot\gamma}\Phi$, and its rate of change by
\beq \label{eq.degreetwo}
m\,\frac{\d}{\d\tau}\!\left( \iiota_{\dot\gamma}\Phi \right)
 = m\,\iiota_{\dot\gamma}\,\d\!\left( \iiota_{\dot\gamma}\Phi \right),
\eeq
which is homogeneous of degree two in the velocity. Indeed, bilinearity gives the momentum
\beq \label{eq.bilinearsplit}
\chi(\dot\gamma) = m\,b_{\rm s}(\dot\gamma,\cdot) + m\,\Phi(\dot\gamma,\cdot),
\eeq
whose second summand is $m\,\iiota_{\dot\gamma}\Phi$. For its derivative, Cartan's identity gives
\beq \label{eq.cartanphi}
\pounds_{\dot\gamma}\left(\iiota_{\dot\gamma}\Phi\right)
= \iiota_{\dot\gamma}\d\!\left(\iiota_{\dot\gamma}\Phi\right)
+ \d\!\left(\iiota_{\dot\gamma}\iiota_{\dot\gamma}\Phi\right),
\eeq
where the second term vanishes because the interior product is nilpotent, so that only the first summand remains, and with it the expression~\eqref{eq.degreetwo}. The expression~\eqref{eq.degreetwo} is the Lie derivative
$\pounds_{\dot\gamma}\left(\iiota_{\dot\gamma}\Phi\right)$, and it depends on more than the velocity
at a point. Two motions through one point, with one velocity there and one speed, taken from
different congruences, are given different values by it. A Lorentz force behaves otherwise, being
fixed by the field and by the velocity at the point, where the neighbouring motions do not enter.
On a surface, where $\Phi = \phi\,{\rm Vol}_g$, the expression is the divergence of $\phi\dot\gamma$
multiplied by the velocity turned through a right angle, and that divergence is what a congruence
of one speed may be chosen to fix as it pleases. Therefore no $\Phi$ returns a Lorentz force, and
the speed has been held fixed throughout the argument.

The degrees agree with that conclusion. Each of the two interior products in the expression~\eqref{eq.degreetwo}
contributes one factor of the velocity, so that under $\dot\gamma\mapsto a\dot\gamma$ the
expression is multiplied by $a^2$ where a Lorentz force is multiplied by $a$. The term belongs to
the congruence, as the exact term of Newton's Second Law~\eqref{eq.nsl} does in the symmetric
case.
\end{proof}

The freedom is real but inert. An antisymmetric part in a bilinear form adds a genuine term to the force, of the same degree as the term an arbitrary torsion contributes there~\cite{lm2026newton}. In this sense the two freedoms inertial motion cannot resolve, the antisymmetric part of the form and the torsion of the transport, arrive in Newton's Second Law at one degree. Thus, neither of them is a field.

%%------------------------------------------------------------
\subsection{The degree of a field in Newton's Second Law}\label{sec.degree}
%%------------------------------------------------------------
A field has one place to enter, and the degrees settle which before we have chosen any geometry. A constitutive morphism which is fibrewise linear returns a momentum of degree one in the velocity, whose derivative is therefore of degree two throughout. A Lorentz force needs degree one. For a field we need a morphism with a piece of degree \emph{zero}, which by homogeneity is a one-form on the base. One is supplied by a Finsler function as soon as it differs from a Riemannian one by a term of degree one. That is a metric of Randers type.

Let $g$ be a metric on $M$, let $A$ be a one-form, let $v>0$, and let
\beq \label{eq.randersF}
F_{\rm R}(y) = \epsilon\,\vert y\vert_g - \frac{1}{v}\,A(y)
\eeq
be the associated Randers function, defined wherever $g(y,y)$ keeps one sign $\epsilon = \pm 1$, with $\vert y\vert_g = \sqrt{\epsilon\,g(y,y)}$ its positive root. Where $g$ is Riemannian that is the whole of $TM$ less its zero section, the function is positive under $\sup_M \vert A\vert_g / v < 1$, which makes it a Finsler metric. The count below asks less, as we set out at the end of this subsection. Then the variational morphism $\chi = \partial F_{\rm R}/\partial y$ carries the velocity to
\beq \label{eq.hilbertmom}
\chi(\dot\gamma) = \frac{\dot\gamma^\flat}{\vert\dot\gamma\vert_g} - \frac{A}{v},
\eeq
where the second summand does not depend on the velocity. Indeed, the metric~\eqref{eq.randersF} is a term of degree one in $y$ less a linear one. The fibre derivative of the norm is
\beq \label{eq.normfibre}
\frac{\partial\vert y\vert_g}{\partial y} = \epsilon\,\frac{y^\flat}{\vert y\vert_g},
\eeq
which multiplied by $\epsilon$ and taken less $A/v$ is the momentum~\eqref{eq.hilbertmom}, the two signs cancelling. The second summand, linear in $y$, differentiates to a one-form on the base, with no velocity left in it.

Where the speed is constant, $\vert\dot\gamma\vert_g = v$, the first summand is $\dot\gamma^\flat/v$. Thus, the momentum is the displaced one, \beq \label{eq.displaced} \chi(\dot\gamma) = \frac{\dot\gamma^\flat - A}{v}, \eeq whose derivative is
\beq \label{eq.lorentzappears}
v\,\frac{\d}{\d\tau}\,\chi(\dot\gamma)
 = \left(\nabla_{\dot\gamma}\dot\gamma\right)^\flat - \iiota_{\dot\gamma}\,\d A  + \d\!\left[ \tfrac12 g(\dot\gamma,\dot\gamma) - A(\dot\gamma) \right],
\eeq
where the middle term is homogeneous of degree one in the velocity. To find it, we apply Cartan's identity to each summand along the flow, where the metric dual gives the identity~\eqref{eq.metricdual} behind Newton's Second Law~\eqref{eq.nsl}. The one-form gives its interior product with $\d A$ together with an exact term, both entering with the minus attached to the one-form in the momentum~\eqref{eq.displaced}. Adding the two and collecting the exact terms, we are left with the derivative~\eqref{eq.lorentzappears}. The interior product contributes one factor of the velocity, scaling by $a$ under $\dot\gamma\mapsto a\dot\gamma$.

Thus, we have the field itself in the middle term, at the degree the Lorentz force requires, while the outer terms are exact. The exact terms separate the morphism from a law of motion. The geodesics of $F_{\rm R}$ are the
trajectories of a charge in the field $-\,\d A$,
\beq \label{eq.emlaw}
\left(\nabla_{\dot\gamma}\dot\gamma\right)^\flat = \iiota_{\dot\gamma}\,\d A,
\eeq
the sign being the one the declaration~\eqref{eq.constitutive} fixed, and
Section~\ref{sec.transport} obtains that law from the spray of the same metric, where no exact term
arises to be disposed of. The momentum does not reach it. Its bracket in the
derivative~\eqref{eq.lorentzappears} is $v\,\iiota_{\dot\gamma}\chi(\dot\gamma) -
\tfrac{\epsilon}{2} v^2$, where Euler's relation on a function homogeneous of degree one gives
\beq \label{eq.euler}
\iiota_{\dot\gamma}\,\chi(\dot\gamma) = F_{\rm R}(\dot\gamma).
\eeq
On a motion of speed $v$ that bracket is $\tfrac{\epsilon}{2} v^2 - A(\dot\gamma)$. The free level
leaves the second summand free, asking only that the speed be $v$ whilst $A(\dot\gamma)$ runs over
each fibre, and along a trajectory it moves with the position, so the exact term of the
derivative~\eqref{eq.lorentzappears} remains. Conversely a trajectory of the charge is force-free for this morphism up to that same exact term, the gap between force-free and inertial motion in the Newtonian construction~\cite[Cor.~3]{lm2026newton}. Against the law~\eqref{eq.lorentz} at charge $q$ and mass $m$, the law~\eqref{eq.emlaw} identifies $\d A$ with $-(q/m)F$.

This degree count, independent of the Randers construction, rules out the antisymmetric part of any fibrewise-linear constitutive morphism as the carrier of a Lorentz force before we choose any Finsler metric.

With Proposition~\ref{prop.degreetwo} beside the count, the momentum~\eqref{eq.hilbertmom} is the object we have been after all along. The displaced momentum is what the primitive of the twisted structure returns on the free energy level, so that the tangent side and the cotangent side agree on a single one-form. Moreover, the degree count above is the count which keeps these motions out of the geodesics of a connection~\cite[Prop.~2.1]{barros2005gauss}, seen from the other bundle. The connection cannot absorb what a morphism must provide at degree zero. Finally, taking the proposition with the count, we may say where a field enters. It enters Newton's Second Law at degree one, through the piece of degree zero in a constitutive morphism, the morphisms in question being the fibrewise maps from velocities to momenta with which this section began.

One question the construction leaves open concerns the momentum~\eqref{eq.hilbertmom}, homogeneous of degree zero in the velocity. Beside the metric and the potential its only parameter is the speed $v$, the mass reaching the trajectories later, at the identification of $\d A$ with $-(q/m)F$. In the symmetric case a particle brings a scale on a bilinear form, where the mass is the representative a particle selects within a homothety class~\cite{lm2026newton}. Here it is a class of Finsler functions. Which of its members a particle selects is a constitutive question we do not settle. Nevertheless, the ratio is the invariant either way, as the affine reparametrisations show on the other side.

%%------------------------------------------------------------
\subsection{The indefinite case}\label{sec.indefinite}
%%------------------------------------------------------------
The count above never uses the signature, and the indefinite case is the relativistic one. Let us
set out what it changes. The dynamics is untouched, whilst the length is not. The bound on the potential reverses direction, a purely magnetic potential fails it however short it is, and the extremals become maxima.

We take every step of the count at a velocity where $g(\dot\gamma,\dot\gamma)$ does not vanish.
There the fibre derivative~\eqref{eq.normfibre}, Euler's relation~\eqref{eq.euler} and the
derivative~\eqref{eq.lorentzappears} hold under any signature. Let $g$ be Lorentzian. On the future
timelike cone $\epsilon = -1$ and $\vert y\vert_g$ is the proper time $\sqrt{-g(y,y)}$. The bars
serve a causal covector by the same rule. The function~\eqref{eq.randersF} is there
$-\left(\vert y\vert_g + A(y)/v\right)$, minus the Lagrangian of a relativistic particle of unit
mass in the potential $A$, while the law~\eqref{eq.emlaw} is the Lorentz force along its worldline,
the overall sign leaving the extremals where they were. The normalisation
$g(\dot\gamma,\dot\gamma) = -1$ is supplied by a parametrisation by proper time without being
asked.

The condition on the potential changes, and it changes direction. We ask positivity of the length
\beq \label{eq.lorentzlength}
L(y) = -F_{\rm R}(y) = \vert y\vert_g + \frac{1}{v}\,A(y) ,
\eeq
and the bound $\sup_M \vert A\vert_g / v < 1$ on it rests on Cauchy--Schwarz, which reverts on the
timelike cone. Two future timelike vectors $u$ and $w$ satisfy
$\vert g(u,w)\vert \geq \vert u\vert_g\,\vert w\vert_g$, with equality just where they are
parallel. Where $A^\sharp$ is past timelike, the reverted inequality gives
\beq \label{eq.reverted}
L(y) \geq \left( 1 + \frac{\vert A\vert_g}{v} \right) \vert y\vert_g .
\eeq
The bound is attained along $A^\sharp$ itself, so a potential of any size leaves the length
positive, and the bound grows with the potential where the Riemannian condition would have bounded
it. No upper bound is left, only a sharp condition on the cone,
\beq \label{eq.conecondition}
L > 0 \ \text{ on the future timelike cone}
\quad \Longleftrightarrow \quad
A^\sharp = 0 \ \text{ or } \ A^\sharp \ \text{ causal and past-directed} .
\eeq
A purely magnetic potential is spacelike. It fails the condition~\eqref{eq.conecondition} however
short it is, since $\vert y\vert_g$ vanishes at the cone while $A(y)$ does not.

An indefinite metric therefore separates the length from the dynamics. The law~\eqref{eq.emlaw}
asks only that $F_{\rm R}$ not vanish, where the fibre metric of a Randers function is
non-degenerate by the identity for its determinant~\cite{baochernshen2000}, whose derivation is
algebra in the fibre and never reaches the signature. The length asks for more, and turns the
extremals round as it does so. The reverted inequality makes the length~\eqref{eq.lorentzlength}
superadditive, so that the trajectories are the curves of \emph{greatest} length, where Randers had asked for the shortest. Proper time behaves in just that way. Both signatures are
taken up in Section~\ref{sec.finsler}, which says which of its steps asks for positive-definiteness.

%%============================================================
\section{The residual freedom}\label{sec.free}
%%============================================================

So far the freedom has been found on the phase space, where a charge has its structure and moves inertially within it. On the configuration space we have produced, for a charge, neither half of the pair inertial motion requires, and both halves are produced here, with what is classical in each conceded as it arrives.

A reader may object that nothing has happened here. We put the field into the symplectic structure,
took inertial motion with respect to that structure, and the trajectories came back the ones the
Lorentz law already gave, so that the word inertial has changed hands whilst the physics has not
moved. Had an affine connection served, the objection would be right and the construction would be a change of words. No affine connection serves~\cite{barros2005gauss}. The relaxation to a transport
which is not affine is therefore forced upon us, and the object it leaves is determinate,
departing from the Levi-Civita transport by the Lorentz force multiplied by the speed, to first
order in the charge. A change of words would leave the geometry where it found it. This
construction leaves a metric of a kind the configuration space does not admit, together with a
transport which is no connection on it, and the rest of this section produces both.

%%------------------------------------------------------------
\subsection{The symmetry which replaces the connection}\label{sec.ratio}
%%------------------------------------------------------------
We should be careful about what is missing, since a charge fails only one of the two demands set out in Section~\ref{sec.morphism}. The force of the law~\eqref{eq.lorentz} is $g$-orthogonal to the velocity and turns it without changing its magnitude, giving us uniformity where rectilinearity fails. Since a manifold admits as many kinds of inertial motion as it does pairs of a form and a law of transport, the natural repair would be to look for another pair.

No affine connection will serve. Indeed, that was proved for a Riemannian metric by Barros, Cabrerizo, Fern\'andez and Romero~\cite[Prop.~2.1]{barros2005gauss}. However, we have ruled out less than the question asks. On a bundle a law of transport is a horizontal distribution whose linear case is the affine connection. For the moment, then, the motion is inertial with respect to a structure on phase space and answers, on the configuration space itself, to nothing we have yet built.

By asking what the symmetry of inertial motion leaves alone we reach whatever replaces the pair, and for a parallel propagator the answer is the group of affine reparametrisations, by the Principle of Inertia of the Newtonian construction~\cite[Thm.~1]{lm2026newton}. Here we find them moving the field together with the speed.

Let $\gamma$ solve the law~\eqref{eq.lorentz} for a field, at a charge, a mass and a speed, and let us reparametrise it affinely,
\beq \label{eq.reparam}
\sigma(\tau) = \gamma(a\tau)
\quad \text{with} \quad
a > 0 ,
\eeq
under which the velocity is scaled once and the acceleration twice. Since the momentum rate involves one derivative of the velocity and one of the curve, and the kinetic term is quadratic, the left-hand side of the law~\eqref{eq.lorentz} is homogeneous of degree two in the velocity. The interior product on the right is linear and therefore of degree one. The reparametrised curve therefore solves the same law at the same charge and mass, for a field scaled by $a$ and at $a$ times the speed. On the pair $(qF/m,\ v)$ the affine reparametrisations act by simultaneous scaling, fixing its quotient.

Thus, at a fixed quotient the affine reparametrisations leave the family of trajectories where it is, though at a fixed field they move it. We demand the symmetry inertial motion admits, and obtain a structure attached to the quotient, one for each value the quotient takes.

From a count of homogeneity we can say what kind of structure that will be. In the velocity the force of the law~\eqref{eq.lorentz} is homogeneous of degree one, against degree two for a geodesic spray and degree one for the extra term of a Randers metric. Thus, the degrees match where they must. Hence, a connection is stopped by the same count, seen from the other side. Whatever a connection cannot absorb is put by a Finsler metric into the term by which it differs from a Riemannian one.

%%------------------------------------------------------------
\subsection{The Randers metric of the free level}\label{sec.finsler}
%%------------------------------------------------------------
That structure is already in the phase space of the problem. Let us suppose from here that the field is exact, and let $A$ be a primitive of $-(q/m)F$, the potential a particle of charge $q$ and mass $m$ answers to, which is the identification Section~\ref{sec.degree} makes at the law~\eqref{eq.emlaw}. Let us take both signatures together, with $\epsilon$ and $\vert y\vert_g$ as that section has them, setting $q = m = 1$ in the displays and restoring the two wherever they are varied. Two conventions for the one symbol meet in this manuscript, since the potential written $A$ in Section~\ref{sec.uniform} is a primitive of the field itself, and the two differ by the ratio $-q/m$. Since the field is exact the twisted form is exact as well. Its primitive differs from the canonical one-form by that same potential,
\beq \label{eq.primitive}
\omega_F = -\d\lambda_A
\quad \text{where} \quad
\lambda_A := \lambda_{\rm can} - \pi^*A.
\eeq
The twisted form is built from the field. Only the primitive that form takes has the potential in it. The energy stays free and the level is the one the free energy fixes.

The fibrewise dilation centred at the potential,
\beq \label{eq.Y}
Y = \left( P_i - A_i \right)\frac{\partial}{\partial P_i},
\eeq
satisfies $\iiota_Y\omega_F = -\lambda_A$ and is therefore the Liouville field of the twisted structure. The momentum $P = A$ is held fixed by the dilation it generates, every other being scaled about that value. Momenta here are measured from the potential. Being vertical, the field~\eqref{eq.Y} annihilates the pull-back $\pi^*F$ and contracts the canonical part of the structure in its fibre slot, hence we are left with
\beq \label{eq.iotaY}
\iiota_Y\omega_F = -\left( P_i - A_i \right)\d x^i,
\eeq
that is, $-\lambda_A$ by the primitive~\eqref{eq.primitive}. Cartan's formula then gives
\beq \label{eq.lieY}
\pounds_Y\omega_F = \d\,\iiota_Y\omega_F + \iiota_Y\d\omega_F
= -\d\lambda_A = \omega_F,
\eeq
where the second term vanishes because $\omega_F$ is closed.

Built from the potential, the Liouville field meets an energy level of the free Hamiltonian --- a level defined without reference to $A$ --- and leaves on it the primitive~\eqref{eq.primitive}. Where $g$ is Riemannian and the potential is shorter than the radius of that level, $A$ lies inside the free ball and the meeting is transversal. That one-form we now identify.

Let us take
\beq \label{eq.freelevel}
\Sigma_v = \left\{ H_0 = \epsilon\,v^2/2 \right\}
\eeq
to be the free level, and let the Legendre transform of $g$ identify it with the vectors of length $v$. Where $g$ is Riemannian that level is the sphere bundle of $g$-radius $v$, with the unit tangent bundle for its vectors. Under a Lorentzian metric it is the mass shell of a particle of mass $v$, with the future timelike vectors at unit proper speed. Then the primitive~\eqref{eq.primitive} restricts to
\beq \label{eq.hilbert}
\lambda_A\big\vert_{\Sigma_v} = v\ \eta_{\rm R},
\eeq
where $\eta_{\rm R}$ is the Hilbert form of
\beq \label{eq.randers}
F_{\rm R}(y) = \vert y \vert_g - \frac{1}{v}\,A(y),
\eeq
a Randers metric wherever $\sup_M \vert A\vert_g / v < 1$. Where $g$ is Riemannian $vF_{\rm R}$ is equivalently the support function of the free ball $\{g^{-1}(P,P)\le v^2\}$ measured from the potential,
\beq \label{eq.support}
v F_{\rm R}(y) = \sup\left\{ \left( P - A \right)(y)\ :\ g^{-1}(P,P) \le v^2 \right\}.
\eeq
For a moving particle both halves mean something. By the identity~\eqref{eq.hilbert} the one-form of the structure, restricted to the particles of a single speed, measures the length of a trajectory in a geometry the field itself supplies. The support function~\eqref{eq.support} gives the largest pairing a direction admits with the momentum of such a particle, once that momentum is measured from the potential.

On $\Sigma_v$ the two agree, as the Legendre transform of $g$ gives
\beq \label{eq.legendreimage}
P = v\,\frac{y^\flat}{\vert y\vert_g},
\eeq
a diffeomorphism onto the vectors of length $v$ since $2H_0$ is fibrewise the quadratic form $g^{-1}$. Substituting it into the primitive~\eqref{eq.primitive} leaves $v\,y^\flat/\vert y\vert_g - A$, evaluated on a tangent vector. On the other side, the metric~\eqref{eq.randers} has fibre derivative
\beq \label{eq.randersfibre}
\frac{\partial F_{\rm R}}{\partial y} = \frac{y^\flat}{\vert y\vert_g} - \frac{A}{v},
\eeq
so its Hilbert form is that same expression. Thus, the two agree up to the factor $v$ --- the identity~\eqref{eq.hilbert}. Homogeneous of degree zero in $y$, the form $\eta_{\rm R}$ does not depend on the level we evaluate it on. The identification of $\Sigma_v$ with the vectors of length $v$ is therefore harmless. For $F_{\rm R}$ to be a Finsler metric we need $\sup_M \vert A\vert_g / v < 1$, the standard condition on the one-form of a Randers metric~\cite{randers1941,baochernshen2000}, whose $g$-length here is $\vert A\vert_g/v$. For the support function we appeal to Cauchy--Schwarz, which gives
\beq \label{eq.cauchyschwarz}
\sup\left\{P(y) \ :\ g^{-1}(P,P)\le v^2\right\} = v\vert y\vert_g,
\eeq
attained at the Legendre image~\eqref{eq.legendreimage}. Independent of $P$, the term $A(y)$ comes out of the supremum, leaving $v\vert y\vert_g - A(y)$, which is $vF_{\rm R}(y)$ by the metric~\eqref{eq.randers}.

Under the Legendre identification the restriction of the primitive of Sternberg's twisted form to the free energy level therefore coincides with the Hilbert form associated with the Randers metric, up to the factor the energy level fixes. We take the ball of $g$-radius $v$, the free level, and measure its support function from the potential. Otherwise a support function is measured from the centre, whilst a ball measured from an interior point other than its centre has for its support function a norm displaced by a linear form, which is a Randers metric. Thus, the displacement of the centre is the whole of the field's contribution to the geometry of the motion.

Of the steps above only three asked for a positive-definite $g$, none of them the identity~\eqref{eq.hilbert}. The restriction of the primitive, the Legendre image~\eqref{eq.legendreimage} and the fibre derivative~\eqref{eq.randersfibre} are algebra in the fibre, holding at every velocity where $g(y,y)$ does not vanish. We ask positive-definiteness only of the support function. Under a Lorentzian metric the region $\{g^{-1}(P,P)\le v^2\}$ is unbounded, leaving the supremum~\eqref{eq.support} infinite in every direction but the zero one. A companion identity nevertheless holds on the shell. On the sheet which the Legendre image~\eqref{eq.legendreimage} reaches, an infimum returns the Lorentzian length~\eqref{eq.lorentzlength},
\beq \label{eq.supportlorentz}
v\,L(y) = \inf\left\{ \left( A - P \right)(y)\ :\ g^{-1}(P,P) = -v^2 \right\},
\eeq
finite on the future causal cone and attained within it, where the reverted Cauchy--Schwarz~\eqref{eq.cauchyschwarz} puts the extremiser at that image. Figure~\ref{fig.support} sets the two side by side. Thus, the supremum over a ball becomes an infimum over a shell, with the potential displacing the point of measurement as before. That value is the Legendre transform of a free relativistic particle and is standard, whilst the parallel with the Riemannian case is what we add. %%------------------------------------------------------------
%%  Figure --- drawn in TikZ, every coordinate a literal.
%%  Inlined 2026-09-10 so this manuscript is a single source file.
%%------------------------------------------------------------
%%  The float for Section 4(b), after the display eq.supportlorentz.
%%  Requires \usepackage{tikz} and \usetikzlibrary{calc}.
\begin{figure}[tbp]
  \centering
  %%  Section 4(b): the support-function reading under both signatures.
\begin{tikzpicture}[x=1cm,y=1cm,line join=round,line cap=round]
  %% \usetikzlibrary{calc} is needed for the one offset coordinate below
  \def\aR{2.1500}
\def\aAx{0.9675}
\def\aTanR{2.1500}
\def\aTanL{-2.1500}
\def\aHalf{2.8380}
\def\aSegR{1.1825}
\def\aSegL{3.1175}
\def\aMidR{1.5587}
\def\aMidL{-0.5913}
\def\aArrow{3.4830}
\def\bLegX{0.4975}
\def\bLegY{-1.2438}
\def\bAX{0.2850}
\def\bAY{-0.3420}
\def\bvL{0.717}
\def\bsheet{(-3.4200,-3.6050) -- (-3.3972,-3.5834) -- (-3.3744,-3.5618) -- (-3.3516,-3.5402) -- (-3.3288,-3.5186) -- (-3.3060,-3.4970) -- (-3.2832,-3.4755) -- (-3.2604,-3.4540) -- (-3.2376,-3.4324) -- (-3.2148,-3.4109) -- (-3.1920,-3.3895) -- (-3.1692,-3.3680) -- (-3.1464,-3.3466) -- (-3.1236,-3.3251) -- (-3.1008,-3.3037) -- (-3.0780,-3.2823) -- (-3.0552,-3.2610) -- (-3.0324,-3.2396) -- (-3.0096,-3.2183) -- (-2.9868,-3.1970) -- (-2.9640,-3.1757) -- (-2.9412,-3.1544) -- (-2.9184,-3.1332) -- (-2.8956,-3.1119) -- (-2.8728,-3.0907) -- (-2.8500,-3.0695) -- (-2.8272,-3.0484) -- (-2.8044,-3.0273) -- (-2.7816,-3.0061) -- (-2.7588,-2.9851) -- (-2.7360,-2.9640) -- (-2.7132,-2.9430) -- (-2.6904,-2.9220) -- (-2.6676,-2.9010) -- (-2.6448,-2.8800) -- (-2.6220,-2.8591) -- (-2.5992,-2.8382) -- (-2.5764,-2.8173) -- (-2.5536,-2.7965) -- (-2.5308,-2.7757) -- (-2.5080,-2.7549) -- (-2.4852,-2.7342) -- (-2.4624,-2.7135) -- (-2.4396,-2.6928) -- (-2.4168,-2.6722) -- (-2.3940,-2.6516) -- (-2.3712,-2.6310) -- (-2.3484,-2.6105) -- (-2.3256,-2.5900) -- (-2.3028,-2.5695) -- (-2.2800,-2.5491) -- (-2.2572,-2.5287) -- (-2.2344,-2.5084) -- (-2.2116,-2.4881) -- (-2.1888,-2.4679) -- (-2.1660,-2.4477) -- (-2.1432,-2.4275) -- (-2.1204,-2.4074) -- (-2.0976,-2.3874) -- (-2.0748,-2.3674) -- (-2.0520,-2.3474) -- (-2.0292,-2.3275) -- (-2.0064,-2.3076) -- (-1.9836,-2.2879) -- (-1.9608,-2.2681) -- (-1.9380,-2.2484) -- (-1.9152,-2.2288) -- (-1.8924,-2.2092) -- (-1.8696,-2.1897) -- (-1.8468,-2.1703) -- (-1.8240,-2.1509) -- (-1.8012,-2.1316) -- (-1.7784,-2.1124) -- (-1.7556,-2.0933) -- (-1.7328,-2.0742) -- (-1.7100,-2.0552) -- (-1.6872,-2.0362) -- (-1.6644,-2.0174) -- (-1.6416,-1.9986) -- (-1.6188,-1.9799) -- (-1.5960,-1.9613) -- (-1.5732,-1.9428) -- (-1.5504,-1.9244) -- (-1.5276,-1.9061) -- (-1.5048,-1.8879) -- (-1.4820,-1.8697) -- (-1.4592,-1.8517) -- (-1.4364,-1.8338) -- (-1.4136,-1.8160) -- (-1.3908,-1.7983) -- (-1.3680,-1.7807) -- (-1.3452,-1.7633) -- (-1.3224,-1.7460) -- (-1.2996,-1.7287) -- (-1.2768,-1.7117) -- (-1.2540,-1.6947) -- (-1.2312,-1.6779) -- (-1.2084,-1.6613) -- (-1.1856,-1.6448) -- (-1.1628,-1.6284) -- (-1.1400,-1.6122) -- (-1.1172,-1.5962) -- (-1.0944,-1.5803) -- (-1.0716,-1.5646) -- (-1.0488,-1.5491) -- (-1.0260,-1.5337) -- (-1.0032,-1.5186) -- (-0.9804,-1.5036) -- (-0.9576,-1.4888) -- (-0.9348,-1.4743) -- (-0.9120,-1.4599) -- (-0.8892,-1.4458) -- (-0.8664,-1.4319) -- (-0.8436,-1.4182) -- (-0.8208,-1.4047) -- (-0.7980,-1.3915) -- (-0.7752,-1.3786) -- (-0.7524,-1.3659) -- (-0.7296,-1.3535) -- (-0.7068,-1.3413) -- (-0.6840,-1.3295) -- (-0.6612,-1.3179) -- (-0.6384,-1.3066) -- (-0.6156,-1.2956) -- (-0.5928,-1.2849) -- (-0.5700,-1.2746) -- (-0.5472,-1.2645) -- (-0.5244,-1.2548) -- (-0.5016,-1.2455) -- (-0.4788,-1.2365) -- (-0.4560,-1.2278) -- (-0.4332,-1.2195) -- (-0.4104,-1.2116) -- (-0.3876,-1.2041) -- (-0.3648,-1.1969) -- (-0.3420,-1.1902) -- (-0.3192,-1.1838) -- (-0.2964,-1.1779) -- (-0.2736,-1.1724) -- (-0.2508,-1.1673) -- (-0.2280,-1.1626) -- (-0.2052,-1.1583) -- (-0.1824,-1.1545) -- (-0.1596,-1.1511) -- (-0.1368,-1.1482) -- (-0.1140,-1.1457) -- (-0.0912,-1.1436) -- (-0.0684,-1.1421) -- (-0.0456,-1.1409) -- (-0.0228,-1.1402) -- (0.0000,-1.1400) -- (0.0228,-1.1402) -- (0.0456,-1.1409) -- (0.0684,-1.1421) -- (0.0912,-1.1436) -- (0.1140,-1.1457) -- (0.1368,-1.1482) -- (0.1596,-1.1511) -- (0.1824,-1.1545) -- (0.2052,-1.1583) -- (0.2280,-1.1626) -- (0.2508,-1.1673) -- (0.2736,-1.1724) -- (0.2964,-1.1779) -- (0.3192,-1.1838) -- (0.3420,-1.1902) -- (0.3648,-1.1969) -- (0.3876,-1.2041) -- (0.4104,-1.2116) -- (0.4332,-1.2195) -- (0.4560,-1.2278) -- (0.4788,-1.2365) -- (0.5016,-1.2455) -- (0.5244,-1.2548) -- (0.5472,-1.2645) -- (0.5700,-1.2746) -- (0.5928,-1.2849) -- (0.6156,-1.2956) -- (0.6384,-1.3066) -- (0.6612,-1.3179) -- (0.6840,-1.3295) -- (0.7068,-1.3413) -- (0.7296,-1.3535) -- (0.7524,-1.3659) -- (0.7752,-1.3786) -- (0.7980,-1.3915) -- (0.8208,-1.4047) -- (0.8436,-1.4182) -- (0.8664,-1.4319) -- (0.8892,-1.4458) -- (0.9120,-1.4599) -- (0.9348,-1.4743) -- (0.9576,-1.4888) -- (0.9804,-1.5036) -- (1.0032,-1.5186) -- (1.0260,-1.5337) -- (1.0488,-1.5491) -- (1.0716,-1.5646) -- (1.0944,-1.5803) -- (1.1172,-1.5962) -- (1.1400,-1.6122) -- (1.1628,-1.6284) -- (1.1856,-1.6448) -- (1.2084,-1.6613) -- (1.2312,-1.6779) -- (1.2540,-1.6947) -- (1.2768,-1.7117) -- (1.2996,-1.7287) -- (1.3224,-1.7460) -- (1.3452,-1.7633) -- (1.3680,-1.7807) -- (1.3908,-1.7983) -- (1.4136,-1.8160) -- (1.4364,-1.8338) -- (1.4592,-1.8517) -- (1.4820,-1.8697) -- (1.5048,-1.8879) -- (1.5276,-1.9061) -- (1.5504,-1.9244) -- (1.5732,-1.9428) -- (1.5960,-1.9613) -- (1.6188,-1.9799) -- (1.6416,-1.9986) -- (1.6644,-2.0174) -- (1.6872,-2.0362) -- (1.7100,-2.0552) -- (1.7328,-2.0742) -- (1.7556,-2.0933) -- (1.7784,-2.1124) -- (1.8012,-2.1316) -- (1.8240,-2.1509) -- (1.8468,-2.1703) -- (1.8696,-2.1897) -- (1.8924,-2.2092) -- (1.9152,-2.2288) -- (1.9380,-2.2484) -- (1.9608,-2.2681) -- (1.9836,-2.2879) -- (2.0064,-2.3076) -- (2.0292,-2.3275) -- (2.0520,-2.3474) -- (2.0748,-2.3674) -- (2.0976,-2.3874) -- (2.1204,-2.4074) -- (2.1432,-2.4275) -- (2.1660,-2.4477) -- (2.1888,-2.4679) -- (2.2116,-2.4881) -- (2.2344,-2.5084) -- (2.2572,-2.5287) -- (2.2800,-2.5491) -- (2.3028,-2.5695) -- (2.3256,-2.5900) -- (2.3484,-2.6105) -- (2.3712,-2.6310) -- (2.3940,-2.6516) -- (2.4168,-2.6722) -- (2.4396,-2.6928) -- (2.4624,-2.7135) -- (2.4852,-2.7342) -- (2.5080,-2.7549) -- (2.5308,-2.7757) -- (2.5536,-2.7965) -- (2.5764,-2.8173) -- (2.5992,-2.8382) -- (2.6220,-2.8591) -- (2.6448,-2.8800) -- (2.6676,-2.9010) -- (2.6904,-2.9220) -- (2.7132,-2.9430) -- (2.7360,-2.9640) -- (2.7588,-2.9851) -- (2.7816,-3.0061) -- (2.8044,-3.0273) -- (2.8272,-3.0484) -- (2.8500,-3.0695) -- (2.8728,-3.0907) -- (2.8956,-3.1119) -- (2.9184,-3.1332) -- (2.9412,-3.1544) -- (2.9640,-3.1757) -- (2.9868,-3.1970) -- (3.0096,-3.2183) -- (3.0324,-3.2396) -- (3.0552,-3.2610) -- (3.0780,-3.2823) -- (3.1008,-3.3037) -- (3.1236,-3.3251) -- (3.1464,-3.3466) -- (3.1692,-3.3680) -- (3.1920,-3.3895) -- (3.2148,-3.4109) -- (3.2376,-3.4324) -- (3.2604,-3.4540) -- (3.2832,-3.4755) -- (3.3060,-3.4970) -- (3.3288,-3.5186) -- (3.3516,-3.5402) -- (3.3744,-3.5618) -- (3.3972,-3.5834) -- (3.4200,-3.6050)}
\def\bAsymR{(3.7050,-3.7050)}
\def\bAsymL{(-3.7050,-3.7050)}
\def\bTanAx{-1.0902}
\def\bTanAy{-0.6088}
\def\bTanBx{2.0852}
\def\bTanBy{-1.8789}
\def\bParAx{-1.3027}
\def\bParAy{0.2931}
\def\bParBx{1.8727}
\def\bParBy{-0.9771}
\def\bFrame{3.0210}
\def\bRunX{2.9868}
\def\bRunY{-3.1970}
\def\bLift{2.2575}
\def\capY{-2.7950}
\def\aHalfLine{2.3650}

  %%--------------------------------------------------  (a) the free ball
  \begin{scope}
    \draw[gray!45,line width=0.5pt] (-\aR-0.45,0) -- (\aR+0.45,0);
    \draw[line width=0.9pt] (0,0) circle (\aR);
    \draw[line width=0.7pt,gray!70] (\aTanR,-\aHalfLine) -- (\aTanR,\aHalfLine);
    \draw[line width=0.7pt,gray!70] (\aTanL,-\aHalfLine) -- (\aTanL,\aHalfLine);
    \draw[line width=0.9pt,|-|] (\aAx,0.40) -- (\aTanR,0.40);
    \draw[line width=0.9pt,|-|] (\aTanL,-0.40) -- (\aAx,-0.40);
    \fill (0,0) circle (1.5pt);
    \node[below left,font=\footnotesize,inner sep=1.5pt] at (0,0) {$O$};
    \fill (\aAx,0) circle (1.5pt);
    \node[below right,font=\footnotesize,inner sep=1.5pt] at (\aAx,0) {$A$};
    \node[above right,font=\scriptsize,inner sep=1pt] at (\aTanR,0.42) {$v-A(\hat y)$};
    \node[below,font=\scriptsize] at (\aMidL,-0.42) {$v+A(\hat y)$};
    \draw[->,line width=0.7pt] (-0.25,\aHalfLine+0.30) -- (0.55,\aHalfLine+0.30);
    \node[right,font=\scriptsize,inner sep=2pt] at (0.55,\aHalfLine+0.30) {$\hat y$};
    \node[font=\footnotesize] at (0,\capY) {(a)\ \ $\epsilon=+1$, the free ball};
  \end{scope}

  %%--------------------------------------------------  (b) the mass shell
  \begin{scope}[xshift=8.55cm,yshift=\bLift cm]
    \draw[gray!55,dotted,line width=0.8pt] (0,0) -- \bAsymR;
    \draw[gray!55,dotted,line width=0.8pt] (0,0) -- \bAsymL;
    \draw[line width=0.9pt] \bsheet;
    \draw[line width=0.7pt,gray!70] (\bTanAx,\bTanAy) -- (\bTanBx,\bTanBy);
    \draw[line width=0.7pt,gray!70,densely dashed] (\bParAx,\bParAy) -- (\bParBx,\bParBy);
    \draw[line width=0.9pt,|-|] (\bAX,\bAY) -- (\bLegX,\bLegY);
    \fill (\bLegX,\bLegY) circle (1.5pt);
    \node[font=\scriptsize,anchor=north west] at ($(\bLegX,\bLegY)+(0.10,-0.16)$) {$P_{*}$};
    \fill (\bAX,\bAY) circle (1.5pt);
    \node[above,font=\footnotesize,inner sep=2pt] at (\bAX,\bAY) {$A$};
    \node[left,font=\scriptsize,inner sep=3pt] at (\bAX,\bAY-0.42) {$v\,L(y)$};
    \draw[->,line width=0.7pt] (\bRunX,\bRunY) -- ++(0.42,-0.42);
    \node[font=\scriptsize,anchor=north east,align=right] at (\bRunX+0.30,\bRunY-0.42)
    {sheet unbounded,\\no supporting line};
  \end{scope}
  \node[font=\footnotesize] at (8.55,\capY) {(b)\ \ $\epsilon=-1$, the mass shell};
\end{tikzpicture}

  \caption{The support-function reading of the Randers function, under both
  signatures. In~(a) the free level is the ball of $g$-radius $v$, and the
  supporting lines with normals $\pm\hat y$ stand at $v \mp A(\hat y)$ from the
  potential $A$ where they stand at $v$ from the centre $O$; that displacement of
  the centre is the field's whole contribution, and it is what makes the length
  depend upon direction. In~(b) the level is the mass shell, whose sheet is
  unbounded, so that no supporting line exists in the opposite sense and the
  supremum~\eqref{eq.support} is lost. The infimum~\eqref{eq.supportlorentz} remains, attained at the Legendre
  image~\eqref{eq.legendreimage}, $P_{*} = v\,y^\flat/\vert y\vert_g$, and
  separating $A$ from the supporting line by $vL(y)$. The dotted lines are the
  light cone. Both panels are drawn in a fibre of $T^*M$ in an orthonormal
  coframe.}
  \label{fig.support}
\end{figure}

The free ball, the appeal to Cauchy--Schwarz~\eqref{eq.cauchyschwarz} and the displaced centre go together. The construction holds under either signature, though the picture is a Riemannian one, with the level~\eqref{eq.freelevel} become a mass shell. That is the separation of the length from the dynamics which Section~\ref{sec.indefinite} records, met here from the side of the phase space.

Through their quotient, the invariant fixed by the affine reparametrisations above, the metric~\eqref{eq.randers} combines the field and the speed. Thus, for every value of that quotient the connection is replaced by a Finsler metric, held apart from its neighbours by the symmetry of inertial motion.

That the trajectories are the geodesics of such a metric is classical. By Maupertuis' principle an exact magnetic Lagrangian at fixed energy is sent to a length functional whose integrand includes a term of degree one in the velocity. Above the Ma\~n\'e critical value the flow on a fixed energy level is a Randers geodesic flow~\cite{mane1997,contreras1998}. On the symplectic side the same value governs, since the level is of contact type above it and, on a surface, only there~\cite{contreras1998,cieliebak2010}. It is the infimum, over the primitives of the field, of half the squared supremum of their length --- which for the primitive in hand is the threshold we found above, stated as a condition on the energy. Randers himself introduced his metric so that the trajectories of a charge should be
geodesics~\cite{randers1941}. The correspondence has been developed for stationary
media~\cite{gibbons2009} and for magnetic billiards~\cite{tabachnikov2004}, where the energy
dependence of the one-form is recorded as a feature of it.

Let us be exact about what of this we claim. The correspondence itself we claim nothing of, nor
the Ma\~n\'e value, nor the contact type of the levels above it, nor the energy dependence as a
recorded feature. All four are the work of the authors just named. What is ours
begins with the identity~\eqref{eq.hilbert}, that the primitive $\lambda_A$ of the twisted form,
restricted to the free level, is $v$ times the Hilbert form of the Randers metric. The second is
the Liouville field~\eqref{eq.Y}, the fibrewise dilation centred at the potential, which is what puts that one-form there. Ours too is the support function~\eqref{eq.support}, whose centre is
displaced to the potential, together with its Lorentzian companion~\eqref{eq.supportlorentz}. So the metric is taken from the primitive of the structure
the flow is already in, rather than built beside it.

One further thing is ours, and it is a reason rather than an object. That the one-form depends on
the energy is recorded in the literature; why it must is not. The symmetry which selects the
metric moves the field and the speed together and fixes their quotient, so a trajectory answers to
that quotient and the one-form takes the speed along with it.

%%------------------------------------------------------------
\subsection{The nonlinear connection}\label{sec.transport}
%%------------------------------------------------------------
\emph{What is left of parallel transport, once the metric has gone?}

A law of transport remains, albeit not one of the affine kind. Let us return to the pair inertial motion requires, only one half of which has been supplied, and to the horizontal distribution above, which need not be linear on each fibre --- a further demand no principle has made. That demand we drop, to see what remains.

The object which replaces the connection is familiar in a form which does not look familiar, so let us have it from the case we know. For a metric $g$ the geodesic equation may be written
\beq \label{eq.sprayform}
\ddot\gamma^i + 2 G^i\!\left(\gamma,\dot\gamma\right) = 0
\quad \text{with} \quad
G^i = \tfrac12\,\mathring{\Gamma}^i{}_{jk}\,\dot\gamma^j\dot\gamma^k ,
\eeq
where the coefficients $G^i$ are quadratic in the velocity, being built from the Christoffel symbols of $g$. A function which is not the length of a metric gives an equation of that same shape with coefficients which need not be quadratic, homogeneous of degree two in the velocity without being quadratic in it, and such a collection of coefficients is called a \emph{spray}. We have one for every Finsler function, obtained from it as the geodesic equation is obtained from a metric.

We obtain the connection from the spray by differentiating along the fibre. Writing
\beq \label{eq.nlcoeff}
N^i{}_j = \frac{\partial G^i}{\partial y^j}
\eeq
we obtain the coefficients of a \emph{nonlinear connection} on the slit tangent bundle, homogeneous of degree one there. That definition is the familiar one, as the case we came from shows. Differentiating the quadratic spray of the geodesic equation~\eqref{eq.sprayform} returns $\mathring{\Gamma}^i{}_{jk}\,y^k$, which is the object parallel transport has always used, and the two agree because the Levi-Civita connection is symmetric in its lower pair. Thus, we have in the fibre derivative~\eqref{eq.nlcoeff} the old rule written in a form which holds once quadraticity is lost.

Such a connection does for us what the affine one does. It says which curves in the tangent bundle are to be counted as horizontal, hence which vectors are carried along a motion without turning, and its horizontal curves are the geodesics of the function it came from. In this sense a Finsler metric determines a law of transport as surely as a Riemannian one does, though the rule it determines is not linear on the fibres.

For the metric~\eqref{eq.randers} the spray is explicit, and the connection with it. Writing $\mathring{G}^i$ for the spray of $g$,
\beq \label{eq.spray}
G^i = \mathring{G}^i
 + \frac{\vert y\vert_g}{2v}\,g^{ik}\left(\d A\right)_{kj} y^j
 + \rho\,y^i,
\eeq
with $\rho$ homogeneous of degree one in the velocity~\cite{baochernshen2000}. We have three terms, and each of them says something to us. The first is the geodesic spray of $g$, which is what a neutral particle answers to. The second is the Lorentz force of $\d A$ multiplied by the speed, and the factor of the speed is what makes it homogeneous of degree two overall, as a spray must be, whilst leaving it short of quadratic. That is the freedom an affine connection does not have, the degree count of the Randers morphism met once more, on the other bundle. The third points along the velocity, so it changes the rate at which a trajectory is traversed and leaves the trajectory where it was.

Let us be careful about that third term, since a term left over in an equation of motion is one we must explain. On a motion of speed $v$ the horizontality condition
\beq \label{eq.horizontality}
\ddot\gamma^i + 2G^i = 0
\eeq
returns the law~\eqref{eq.emlaw} up to that tangential term. Indeed, a term along the velocity cannot turn a curve, so the horizontal curves and the trajectories of a charge are the same curves differently parametrised, and as paths in the configuration space they coincide. With $\d A$ the middle term vanishes as well, hence a closed potential returns the geodesics of $g$ and the trajectories record the field.

The neutral particle we now put to the transport, as we have already put it to the structure, and it serves as a second check on the whole arrangement. With the two constitutive quantities restored, the potential enters the metric~\eqref{eq.randers} as $-qA/mv$ and the field the spray~\eqref{eq.spray} as $-q\,\d A/m$, with $A$ there a primitive of the field itself. At vanishing charge the metric returns the length of $g$, both added terms of the decomposition~\eqref{eq.spray} vanish and the coefficients reduce to
\beq \label{eq.levicivita}
N^i{}_j = \mathring{\Gamma}^i{}_{jk}\,y^k,
\eeq
linear in the velocity, with the Christoffel symbols of $g$ for its coefficients. That is the connection of Levi-Civita, which is where we began. Moreover, the departure from it is of first order in the charge, with the Lorentz force multiplied by the speed for its leading term, though the spray is quadratic in the velocity at zero charge. Thus, the neutral member of the family of transports is the affine one, precisely as the neutral member of the line of kinematic structures is the canonical one. Indeed, the two are one statement told on the two bundles. A charge moves a transport out of the affine class.

With the linearity we lose the algebra of the tangent spaces, and we lose one half of that algebra whilst keeping the other. Under $N^i{}_j$ the transport still commutes with scaling, since the coefficients are homogeneous of degree one in the velocity, so a particle at twice the speed is carried along a scaled solution of the same equation. However, addition is another matter. In the affine case the coefficients are linear in the velocity and the transport is an isomorphism of tangent spaces, which is what lets a frame be carried along a curve and its vectors be added at the far end. Here the transport is positively homogeneous, without being linear, so a frame carried by it no longer superposes. Thus, two vectors may each be transported, whilst their sum is not the transport of their sum.

Under reversal the nonlinearity is felt again. In the coefficients~\eqref{eq.nlcoeff} the Christoffel part is odd in the velocity and the field part, having $\vert y\vert_g$ for a factor, is even. Reversing the direction of travel therefore does not invert the transport, an asymmetry the Randers length shows as well in $F_{\rm R}(-y) \neq F_{\rm R}(y)$. Each particle meets it scaled by the quotient $qF/mv$, into which its constitution enters through $q/m$.

Ingarden proposed this construction, pairing a Randers metric with a nonlinear connection carrying the field~\cite{ingarden1976}, and Miron developed it~\cite{mironanastasiei1994}. There the autoparallel curves of the canonical nonlinear connection of the Lagrange space of electrodynamics are the Lorentz equations. That literature holds two such connections apart and treats them as different objects. In a \emph{Randers space} the metric comes with its own Cartan nonlinear connection, whilst in an \emph{Ingarden space} it comes with the Lorentz one, whose field term is additive, of degree zero, correcting the Christoffel part by $+\,g^{ik}F_{kj}$. The decomposition~\eqref{eq.spray} is the first of those two, written out.

Two senses of the word affine are met here, since as unparametrised paths the trajectories are the geodesics of no affine connection, a statement which holds only where the field is non-trivial. As a connection the spray fails to be quadratic already when the one-form is merely non-parallel, whether or not the field vanishes. The first is a statement about the field, the second about the potential.

We have then a family of structures --- a Finsler metric and a transport for every value of the quotient, each settling what inertial motion is for a particle of that value. It is worth asking what such a family amounts to.

%%============================================================
\section{Two fields on the round sphere}\label{sec.sphere}
%%============================================================
Let us carry the construction through on one configuration, and twice over. Take $M = \Sph^2$ with the round metric throughout and put on it two fields which differ in one respect --- the flux each sends through the sphere. Everything the construction has to say about them follows from that difference, and between them we shall find the two halves of the pair inertial motion requires coming apart. Throughout, $\theta$ is the colatitude and $\varphi$ the longitude, so that
\beq \label{eq.roundsphere}
g = \d\theta^2 + \sin^2\!\theta\ \d\varphi^2
\quad \text{and} \quad
{\rm Vol}_g = \sin\theta\ \d\theta\wedge\d\varphi .
\eeq

%%------------------------------------------------------------
\subsection{The uniform ambient field}\label{sec.uniform}
%%------------------------------------------------------------
Embed the sphere in $\Reals^3$ in the usual way and let the field on it be the restriction of a uniform ambient one of strength $B$ along the polar axis. That ambient field is $B\,\d x\wedge\d y$, whose pull-back to the sphere is the polar component of it, so
\beq \label{eq.spherefield}
F = B\cos\theta\ {\rm Vol}_g
\quad \text{and} \quad
A = \frac{B}{2}\sin^2\!\theta\ \d\varphi,
\quad \text{with} \quad
\alpha := \frac{qB}{2m},
\eeq
the potential being the one whose exterior derivative returns the field,
\beq \label{eq.spherecheck}
\d A = B\sin\theta\cos\theta\ \d\theta\wedge\d\varphi = B\cos\theta\ {\rm Vol}_g .
\eeq
Here the field is a \emph{function} multiple of the area form, whose sign changes across the equator. The flux therefore cancels between the hemispheres,
\beq \label{eq.fluxzero}
\oint_{\Sph^2} F = B\int_0^{2\pi}\!\!\int_0^{\pi} \cos\theta\,\sin\theta\ \d\theta\,\d\varphi
= 2\pi B\int_0^\pi \tfrac12\sin 2\theta\ \d\theta = 0 ,
\eeq
which is a topological statement about this configuration. As much flux is pushed through the northern hemisphere by a uniform ambient field as is drawn back through the southern one. The sphere, having no boundary, retains none of it.

We may therefore return to Section~\ref{sec.sym}. Since none of the charge is retained by the structure, every pair of charges is related by a fibre translation, the one built from $A$. Moreover, the potential is smooth at both poles, where $\d\varphi$ is not, because $\sin^2\!\theta$ vanishes there to second order, killing the singularity of the coordinate one-form.

By the Hilbert-form identification~\eqref{eq.hilbert} the free level of a particle of charge $q$ and mass $m$ at speed $v$ has the Randers metric
\beq \label{eq.spheremetric}
F_{\rm R}(y) = \vert y\vert_g + \frac{\alpha}{v}\sin^2\!\theta\ \d\varphi(y),
\eeq
whose one-form we may measure at once. By the declaration~\eqref{eq.constitutive} that one-form is $-(q/m)A$, which leaves the length below untouched. The round metric~\eqref{eq.roundsphere} has an inverse which is diagonal with entries $1$ and $\sin^{-2}\!\theta$, and the effective potential $(q/m)A$ has the single component $\alpha\sin^2\!\theta$ along $\d\varphi$, so
\beq \label{eq.spherenorm}
\left\vert \frac{q}{m}A \right\vert_g^2
= g^{\varphi\varphi}\left( \alpha\sin^2\!\theta \right)^2
= \frac{\alpha^2\sin^4\!\theta}{\sin^2\!\theta}
= \alpha^2\sin^2\!\theta ,
\quad \text{whence} \quad
\frac{\vert A\vert_g}{v} = \frac{\alpha}{v}\,\sin\theta .
\eeq
We find the supremum of $\sin\theta$ over the sphere to be unity, reached on the equator, where the parallels are longest and the most room is left for the potential to grow. Hence the Randers condition of Section~\ref{sec.finsler} becomes $\alpha/v < 1$, which falls at the threshold
\beq \label{eq.threshold}
\alpha = v ,
\quad \text{that is} \quad
\frac{qB}{2m} = v .
\eeq
Below that threshold, at every value of the quotient $\alpha/v$, we obtain the trajectories as the geodesics of the metric~\eqref{eq.spheremetric}. We may then appeal to the known criterion under which one of them closes~\cite{lm2026closure}.

There the general construction stops, while a particular geometry starts. The metric~\eqref{eq.spheremetric} comes from a navigation problem in which the wind is not Killing and the flag curvature not constant. The equator is a closed geodesic whose stability changes at half the speed, while the closed geodesics through the poles stay degenerate at every field strength. Of that we take up nothing here. None of it follows from the framework above.

%%------------------------------------------------------------
\subsection{The monopole}\label{sec.monopole}
%%------------------------------------------------------------
Where the flux does not cancel the construction answers differently at every step. Let us keep the sphere with its metric, taking the field of a Dirac monopole of strength $k$, which on the sphere is the area form itself,
\beq \label{eq.monopolefield}
F = k\ {\rm Vol}_g = k\sin\theta\ \d\theta\wedge\d\varphi
\quad \text{and} \quad
\oint_{\Sph^2} F = k\int_0^{2\pi}\!\!\int_0^\pi \sin\theta\ \d\theta\,\d\varphi = 4\pi k ,
\eeq
every flux here being taken over the whole sphere. The field is closed and source-free, whilst the flux~\eqref{eq.monopolefield} does not vanish, the sphere retaining what a uniform field returns to it.

No potential covers the sphere. It is worth seeing in what manner the attempt fails, since the failure decides the rest of the subsection. It is covered by two charts, the northern omitting the south pole, the southern omitting the north, and on them we have
\beq \label{eq.wuyang}
A_{\rm N} = k\left( 1 - \cos\theta \right)\d\varphi
\quad \text{and} \quad
A_{\rm S} = -k\left( 1 + \cos\theta \right)\d\varphi ,
\eeq
each of which returns the field~\eqref{eq.monopolefield} upon differentiation, differing on the overlap by
\beq \label{eq.wuyangdiff}
A_{\rm N} - A_{\rm S} = 2k\ \d\varphi ,
\eeq
a closed one-form which is not the differential of any function there. The $\d$ in its name makes no claim of exactness. On the overlap, which is the sphere less both poles, the longitude is no function but jumps by $2\pi$ upon return, so that $\d\varphi$ has the period $\oint\d\varphi = 2\pi$ around every parallel and no primitive with it. That is the description Dirac gave the problem and the one from which quantisation is usually reached~\cite{trautman1977,ryder1980}. We have integrated the trajectories in this field elsewhere, using that treatment here as given~\cite{lm2025monopole}.

The construction of Section~\ref{sec.kin} holds intact. The twisted family~\eqref{eq.family} requires the field to be closed and global --- as it is --- whilst requiring nothing of its flux. A charge still selects its member of the line, whose trajectories are still inertial motions of the kinetic energy.

Moreover, the charge stops being a matter of description. Where the flux cancels, a fibre translation carries the kinematics of any charge to that of any other, the two being told apart by their energies. Here the class identity of Section~\ref{sec.sym} gives $q'[F] = q\,\deg(\phi)[F]$ in $H^2(\Sph^2;\Reals)$, hence an orientation-preserving diffeomorphism leaves $q' = q$, with no translation relating two charges. In this sense the flux makes the charge a property of the particle.

It is worth marking what this is not. Healey asks which quantity of the electromagnetic field is real, answering with the holonomies of loops~\cite{healey2001}. The monopole is that point at its strongest, since a potential is no globally defined thing. We ask a narrower question with a sharper answer, whether two charges are related by a symmetry of the structure a charge selects. That question is answered by the flux. The two questions meet on one mathematical fact and part on everything else.

Finally the two halves of the pair come apart. The form fails, twice over. With no global primitive there is no $\lambda_A$, no dilation centred at a potential, no Randers metric, the whole of Section~\ref{sec.finsler} resting on a one-form which this field does not admit. Nor does working in one chart at a time save anything, since the length of the chart potential
\beq \label{eq.wuyangnorm}
\left\vert A_{\rm N} \right\vert_g
= \frac{k\left( 1 - \cos\theta \right)}{\sin\theta}
= k\,\tan\frac{\theta}{2}
\eeq
grows without bound as $\theta$ approaches the pole that chart omits. The Randers condition $\sup_M \vert A\vert_g / v < 1$ therefore fails in either chart at every charge, however small, where in the uniform field~\eqref{eq.spherefield} it failed only above a threshold (Figure~\ref{fig.randers}).

%%------------------------------------------------------------
%%  Figure --- drawn in TikZ, every coordinate a literal.
%%  Inlined 2026-09-10 so this manuscript is a single source file.
%%------------------------------------------------------------
%%  The float, ready to drop into Section 5(b) after the display eq.wuyangnorm.
%%  Requires \usepackage{tikz} in the preamble and nothing else.
\begin{figure}[htbp]
  \centering
  %%  Section 5: the Randers condition in the two fields on the round sphere.
\begin{tikzpicture}[x=1cm,y=1cm,line join=round,line cap=round]
  \def\uniformpath{(0.0000,0.0000) -- (0.0230,0.0147) -- (0.0460,0.0293) -- (0.0690,0.0440) -- (0.0920,0.0586) -- (0.1150,0.0733) -- (0.1380,0.0879) -- (0.1610,0.1026) -- (0.1840,0.1172) -- (0.2070,0.1318) -- (0.2300,0.1465) -- (0.2530,0.1611) -- (0.2760,0.1757) -- (0.2990,0.1903) -- (0.3220,0.2048) -- (0.3450,0.2194) -- (0.3680,0.2340) -- (0.3910,0.2485) -- (0.4140,0.2630) -- (0.4370,0.2775) -- (0.4600,0.2920) -- (0.4830,0.3065) -- (0.5060,0.3209) -- (0.5290,0.3354) -- (0.5520,0.3498) -- (0.5750,0.3642) -- (0.5980,0.3785) -- (0.6210,0.3929) -- (0.6440,0.4072) -- (0.6670,0.4215) -- (0.6900,0.4358) -- (0.7130,0.4500) -- (0.7360,0.4642) -- (0.7590,0.4784) -- (0.7820,0.4926) -- (0.8050,0.5067) -- (0.8280,0.5208) -- (0.8510,0.5348) -- (0.8740,0.5489) -- (0.8970,0.5629) -- (0.9200,0.5768) -- (0.9430,0.5908) -- (0.9660,0.6046) -- (0.9890,0.6185) -- (1.0120,0.6323) -- (1.0350,0.6461) -- (1.0580,0.6598) -- (1.0810,0.6735) -- (1.1040,0.6872) -- (1.1270,0.7008) -- (1.1500,0.7143) -- (1.1730,0.7279) -- (1.1960,0.7413) -- (1.2190,0.7548) -- (1.2420,0.7682) -- (1.2650,0.7815) -- (1.2880,0.7948) -- (1.3110,0.8080) -- (1.3340,0.8212) -- (1.3570,0.8344) -- (1.3800,0.8474) -- (1.4030,0.8605) -- (1.4260,0.8735) -- (1.4490,0.8864) -- (1.4720,0.8993) -- (1.4950,0.9121) -- (1.5180,0.9249) -- (1.5410,0.9376) -- (1.5640,0.9502) -- (1.5870,0.9628) -- (1.6100,0.9753) -- (1.6330,0.9878) -- (1.6560,1.0002) -- (1.6790,1.0126) -- (1.7020,1.0248) -- (1.7250,1.0371) -- (1.7480,1.0492) -- (1.7710,1.0613) -- (1.7940,1.0733) -- (1.8170,1.0853) -- (1.8400,1.0972) -- (1.8630,1.1090) -- (1.8860,1.1208) -- (1.9090,1.1325) -- (1.9320,1.1441) -- (1.9550,1.1556) -- (1.9780,1.1671) -- (2.0010,1.1785) -- (2.0240,1.1899) -- (2.0470,1.2011) -- (2.0700,1.2123) -- (2.0930,1.2234) -- (2.1160,1.2344) -- (2.1390,1.2454) -- (2.1620,1.2563) -- (2.1850,1.2671) -- (2.2080,1.2778) -- (2.2310,1.2885) -- (2.2540,1.2990) -- (2.2770,1.3095) -- (2.3000,1.3199) -- (2.3230,1.3303) -- (2.3460,1.3405) -- (2.3690,1.3507) -- (2.3920,1.3607) -- (2.4150,1.3707) -- (2.4380,1.3806) -- (2.4610,1.3905) -- (2.4840,1.4002) -- (2.5070,1.4099) -- (2.5300,1.4194) -- (2.5530,1.4289) -- (2.5760,1.4383) -- (2.5990,1.4476) -- (2.6220,1.4568) -- (2.6450,1.4659) -- (2.6680,1.4750) -- (2.6910,1.4839) -- (2.7140,1.4927) -- (2.7370,1.5015) -- (2.7600,1.5102) -- (2.7830,1.5187) -- (2.8060,1.5272) -- (2.8290,1.5356) -- (2.8520,1.5439) -- (2.8750,1.5521) -- (2.8980,1.5602) -- (2.9210,1.5682) -- (2.9440,1.5761) -- (2.9670,1.5839) -- (2.9900,1.5916) -- (3.0130,1.5992) -- (3.0360,1.6067) -- (3.0590,1.6141) -- (3.0820,1.6214) -- (3.1050,1.6287) -- (3.1280,1.6358) -- (3.1510,1.6428) -- (3.1740,1.6497) -- (3.1970,1.6565) -- (3.2200,1.6632) -- (3.2430,1.6698) -- (3.2660,1.6763) -- (3.2890,1.6827) -- (3.3120,1.6890) -- (3.3350,1.6952) -- (3.3580,1.7013) -- (3.3810,1.7073) -- (3.4040,1.7131) -- (3.4270,1.7189) -- (3.4500,1.7246) -- (3.4730,1.7301) -- (3.4960,1.7356) -- (3.5190,1.7409) -- (3.5420,1.7462) -- (3.5650,1.7513) -- (3.5880,1.7563) -- (3.6110,1.7612) -- (3.6340,1.7660) -- (3.6570,1.7707) -- (3.6800,1.7753) -- (3.7030,1.7798) -- (3.7260,1.7841) -- (3.7490,1.7884) -- (3.7720,1.7925) -- (3.7950,1.7966) -- (3.8180,1.8005) -- (3.8410,1.8043) -- (3.8640,1.8080) -- (3.8870,1.8116) -- (3.9100,1.8151) -- (3.9330,1.8185) -- (3.9560,1.8217) -- (3.9790,1.8249) -- (4.0020,1.8279) -- (4.0250,1.8308) -- (4.0480,1.8336) -- (4.0710,1.8363) -- (4.0940,1.8389) -- (4.1170,1.8413) -- (4.1400,1.8437) -- (4.1630,1.8459) -- (4.1860,1.8480) -- (4.2090,1.8501) -- (4.2320,1.8519) -- (4.2550,1.8537) -- (4.2780,1.8554) -- (4.3010,1.8569) -- (4.3240,1.8584) -- (4.3470,1.8597) -- (4.3700,1.8609) -- (4.3930,1.8620) -- (4.4160,1.8630) -- (4.4390,1.8638) -- (4.4620,1.8646) -- (4.4850,1.8652) -- (4.5080,1.8657) -- (4.5310,1.8661) -- (4.5540,1.8664) -- (4.5770,1.8666) -- (4.6000,1.8667) -- (4.6230,1.8666) -- (4.6460,1.8664) -- (4.6690,1.8661) -- (4.6920,1.8657) -- (4.7150,1.8652) -- (4.7380,1.8646) -- (4.7610,1.8638) -- (4.7840,1.8630) -- (4.8070,1.8620) -- (4.8300,1.8609) -- (4.8530,1.8597) -- (4.8760,1.8584) -- (4.8990,1.8569) -- (4.9220,1.8554) -- (4.9450,1.8537) -- (4.9680,1.8519) -- (4.9910,1.8501) -- (5.0140,1.8480) -- (5.0370,1.8459) -- (5.0600,1.8437) -- (5.0830,1.8413) -- (5.1060,1.8389) -- (5.1290,1.8363) -- (5.1520,1.8336) -- (5.1750,1.8308) -- (5.1980,1.8279) -- (5.2210,1.8249) -- (5.2440,1.8217) -- (5.2670,1.8185) -- (5.2900,1.8151) -- (5.3130,1.8116) -- (5.3360,1.8080) -- (5.3590,1.8043) -- (5.3820,1.8005) -- (5.4050,1.7966) -- (5.4280,1.7925) -- (5.4510,1.7884) -- (5.4740,1.7841) -- (5.4970,1.7798) -- (5.5200,1.7753) -- (5.5430,1.7707) -- (5.5660,1.7660) -- (5.5890,1.7612) -- (5.6120,1.7563) -- (5.6350,1.7513) -- (5.6580,1.7462) -- (5.6810,1.7409) -- (5.7040,1.7356) -- (5.7270,1.7301) -- (5.7500,1.7246) -- (5.7730,1.7189) -- (5.7960,1.7131) -- (5.8190,1.7073) -- (5.8420,1.7013) -- (5.8650,1.6952) -- (5.8880,1.6890) -- (5.9110,1.6827) -- (5.9340,1.6763) -- (5.9570,1.6698) -- (5.9800,1.6632) -- (6.0030,1.6565) -- (6.0260,1.6497) -- (6.0490,1.6428) -- (6.0720,1.6358) -- (6.0950,1.6287) -- (6.1180,1.6214) -- (6.1410,1.6141) -- (6.1640,1.6067) -- (6.1870,1.5992) -- (6.2100,1.5916) -- (6.2330,1.5839) -- (6.2560,1.5761) -- (6.2790,1.5682) -- (6.3020,1.5602) -- (6.3250,1.5521) -- (6.3480,1.5439) -- (6.3710,1.5356) -- (6.3940,1.5272) -- (6.4170,1.5187) -- (6.4400,1.5102) -- (6.4630,1.5015) -- (6.4860,1.4927) -- (6.5090,1.4839) -- (6.5320,1.4750) -- (6.5550,1.4659) -- (6.5780,1.4568) -- (6.6010,1.4476) -- (6.6240,1.4383) -- (6.6470,1.4289) -- (6.6700,1.4194) -- (6.6930,1.4099) -- (6.7160,1.4002) -- (6.7390,1.3905) -- (6.7620,1.3806) -- (6.7850,1.3707) -- (6.8080,1.3607) -- (6.8310,1.3507) -- (6.8540,1.3405) -- (6.8770,1.3303) -- (6.9000,1.3199) -- (6.9230,1.3095) -- (6.9460,1.2990) -- (6.9690,1.2885) -- (6.9920,1.2778) -- (7.0150,1.2671) -- (7.0380,1.2563) -- (7.0610,1.2454) -- (7.0840,1.2344) -- (7.1070,1.2234) -- (7.1300,1.2123) -- (7.1530,1.2011) -- (7.1760,1.1899) -- (7.1990,1.1785) -- (7.2220,1.1671) -- (7.2450,1.1556) -- (7.2680,1.1441) -- (7.2910,1.1325) -- (7.3140,1.1208) -- (7.3370,1.1090) -- (7.3600,1.0972) -- (7.3830,1.0853) -- (7.4060,1.0733) -- (7.4290,1.0613) -- (7.4520,1.0492) -- (7.4750,1.0371) -- (7.4980,1.0248) -- (7.5210,1.0126) -- (7.5440,1.0002) -- (7.5670,0.9878) -- (7.5900,0.9753) -- (7.6130,0.9628) -- (7.6360,0.9502) -- (7.6590,0.9376) -- (7.6820,0.9249) -- (7.7050,0.9121) -- (7.7280,0.8993) -- (7.7510,0.8864) -- (7.7740,0.8735) -- (7.7970,0.8605) -- (7.8200,0.8474) -- (7.8430,0.8344) -- (7.8660,0.8212) -- (7.8890,0.8080) -- (7.9120,0.7948) -- (7.9350,0.7815) -- (7.9580,0.7682) -- (7.9810,0.7548) -- (8.0040,0.7413) -- (8.0270,0.7279) -- (8.0500,0.7143) -- (8.0730,0.7008) -- (8.0960,0.6872) -- (8.1190,0.6735) -- (8.1420,0.6598) -- (8.1650,0.6461) -- (8.1880,0.6323) -- (8.2110,0.6185) -- (8.2340,0.6046) -- (8.2570,0.5908) -- (8.2800,0.5768) -- (8.3030,0.5629) -- (8.3260,0.5489) -- (8.3490,0.5348) -- (8.3720,0.5208) -- (8.3950,0.5067) -- (8.4180,0.4926) -- (8.4410,0.4784) -- (8.4640,0.4642) -- (8.4870,0.4500) -- (8.5100,0.4358) -- (8.5330,0.4215) -- (8.5560,0.4072) -- (8.5790,0.3929) -- (8.6020,0.3785) -- (8.6250,0.3642) -- (8.6480,0.3498) -- (8.6710,0.3354) -- (8.6940,0.3209) -- (8.7170,0.3065) -- (8.7400,0.2920) -- (8.7630,0.2775) -- (8.7860,0.2630) -- (8.8090,0.2485) -- (8.8320,0.2340) -- (8.8550,0.2194) -- (8.8780,0.2048) -- (8.9010,0.1903) -- (8.9240,0.1757) -- (8.9470,0.1611) -- (8.9700,0.1465) -- (8.9930,0.1318) -- (9.0160,0.1172) -- (9.0390,0.1026) -- (9.0620,0.0879) -- (9.0850,0.0733) -- (9.1080,0.0586) -- (9.1310,0.0440) -- (9.1540,0.0293) -- (9.1770,0.0147) -- (9.2000,0.0000)}
\def\monopolepath{(0.0000,0.0000) -- (0.0230,0.0073) -- (0.0460,0.0147) -- (0.0690,0.0220) -- (0.0920,0.0293) -- (0.1150,0.0367) -- (0.1380,0.0440) -- (0.1610,0.0513) -- (0.1840,0.0587) -- (0.2070,0.0660) -- (0.2300,0.0733) -- (0.2530,0.0807) -- (0.2760,0.0880) -- (0.2990,0.0954) -- (0.3220,0.1027) -- (0.3450,0.1101) -- (0.3680,0.1174) -- (0.3910,0.1248) -- (0.4140,0.1322) -- (0.4370,0.1395) -- (0.4600,0.1469) -- (0.4830,0.1543) -- (0.5060,0.1617) -- (0.5290,0.1691) -- (0.5520,0.1765) -- (0.5750,0.1839) -- (0.5980,0.1913) -- (0.6210,0.1987) -- (0.6440,0.2061) -- (0.6670,0.2135) -- (0.6900,0.2209) -- (0.7130,0.2284) -- (0.7360,0.2358) -- (0.7590,0.2433) -- (0.7820,0.2507) -- (0.8050,0.2582) -- (0.8280,0.2657) -- (0.8510,0.2731) -- (0.8740,0.2806) -- (0.8970,0.2881) -- (0.9200,0.2957) -- (0.9430,0.3032) -- (0.9660,0.3107) -- (0.9890,0.3182) -- (1.0120,0.3258) -- (1.0350,0.3333) -- (1.0580,0.3409) -- (1.0810,0.3485) -- (1.1040,0.3561) -- (1.1270,0.3637) -- (1.1500,0.3713) -- (1.1730,0.3789) -- (1.1960,0.3866) -- (1.2190,0.3942) -- (1.2420,0.4019) -- (1.2650,0.4096) -- (1.2880,0.4172) -- (1.3110,0.4250) -- (1.3340,0.4327) -- (1.3570,0.4404) -- (1.3800,0.4481) -- (1.4030,0.4559) -- (1.4260,0.4637) -- (1.4490,0.4715) -- (1.4720,0.4793) -- (1.4950,0.4871) -- (1.5180,0.4949) -- (1.5410,0.5028) -- (1.5640,0.5107) -- (1.5870,0.5185) -- (1.6100,0.5265) -- (1.6330,0.5344) -- (1.6560,0.5423) -- (1.6790,0.5503) -- (1.7020,0.5583) -- (1.7250,0.5662) -- (1.7480,0.5743) -- (1.7710,0.5823) -- (1.7940,0.5903) -- (1.8170,0.5984) -- (1.8400,0.6065) -- (1.8630,0.6146) -- (1.8860,0.6228) -- (1.9090,0.6309) -- (1.9320,0.6391) -- (1.9550,0.6473) -- (1.9780,0.6555) -- (2.0010,0.6638) -- (2.0240,0.6720) -- (2.0470,0.6803) -- (2.0700,0.6886) -- (2.0930,0.6970) -- (2.1160,0.7054) -- (2.1390,0.7137) -- (2.1620,0.7222) -- (2.1850,0.7306) -- (2.2080,0.7391) -- (2.2310,0.7476) -- (2.2540,0.7561) -- (2.2770,0.7646) -- (2.3000,0.7732) -- (2.3230,0.7818) -- (2.3460,0.7904) -- (2.3690,0.7991) -- (2.3920,0.8078) -- (2.4150,0.8165) -- (2.4380,0.8252) -- (2.4610,0.8340) -- (2.4840,0.8428) -- (2.5070,0.8517) -- (2.5300,0.8605) -- (2.5530,0.8694) -- (2.5760,0.8784) -- (2.5990,0.8874) -- (2.6220,0.8964) -- (2.6450,0.9054) -- (2.6680,0.9145) -- (2.6910,0.9236) -- (2.7140,0.9327) -- (2.7370,0.9419) -- (2.7600,0.9511) -- (2.7830,0.9604) -- (2.8060,0.9697) -- (2.8290,0.9790) -- (2.8520,0.9883) -- (2.8750,0.9978) -- (2.8980,1.0072) -- (2.9210,1.0167) -- (2.9440,1.0262) -- (2.9670,1.0358) -- (2.9900,1.0454) -- (3.0130,1.0550) -- (3.0360,1.0647) -- (3.0590,1.0745) -- (3.0820,1.0842) -- (3.1050,1.0941) -- (3.1280,1.1039) -- (3.1510,1.1139) -- (3.1740,1.1238) -- (3.1970,1.1338) -- (3.2200,1.1439) -- (3.2430,1.1540) -- (3.2660,1.1642) -- (3.2890,1.1744) -- (3.3120,1.1846) -- (3.3350,1.1949) -- (3.3580,1.2053) -- (3.3810,1.2157) -- (3.4040,1.2262) -- (3.4270,1.2367) -- (3.4500,1.2473) -- (3.4730,1.2579) -- (3.4960,1.2686) -- (3.5190,1.2793) -- (3.5420,1.2901) -- (3.5650,1.3010) -- (3.5880,1.3119) -- (3.6110,1.3229) -- (3.6340,1.3339) -- (3.6570,1.3450) -- (3.6800,1.3562) -- (3.7030,1.3674) -- (3.7260,1.3787) -- (3.7490,1.3901) -- (3.7720,1.4015) -- (3.7950,1.4130) -- (3.8180,1.4246) -- (3.8410,1.4362) -- (3.8640,1.4479) -- (3.8870,1.4597) -- (3.9100,1.4716) -- (3.9330,1.4835) -- (3.9560,1.4955) -- (3.9790,1.5076) -- (4.0020,1.5197) -- (4.0250,1.5319) -- (4.0480,1.5442) -- (4.0710,1.5566) -- (4.0940,1.5691) -- (4.1170,1.5816) -- (4.1400,1.5943) -- (4.1630,1.6070) -- (4.1860,1.6198) -- (4.2090,1.6327) -- (4.2320,1.6457) -- (4.2550,1.6588) -- (4.2780,1.6719) -- (4.3010,1.6852) -- (4.3240,1.6985) -- (4.3470,1.7120) -- (4.3700,1.7255) -- (4.3930,1.7392) -- (4.4160,1.7529) -- (4.4390,1.7668) -- (4.4620,1.7807) -- (4.4850,1.7948) -- (4.5080,1.8089) -- (4.5310,1.8232) -- (4.5540,1.8376) -- (4.5770,1.8521) -- (4.6000,1.8667) -- (4.6230,1.8814) -- (4.6460,1.8962) -- (4.6690,1.9112) -- (4.6920,1.9263) -- (4.7150,1.9414) -- (4.7380,1.9568) -- (4.7610,1.9722) -- (4.7840,1.9878) -- (4.8070,2.0035) -- (4.8300,2.0193) -- (4.8530,2.0353) -- (4.8760,2.0514) -- (4.8990,2.0677) -- (4.9220,2.0841) -- (4.9450,2.1006) -- (4.9680,2.1173) -- (4.9910,2.1342) -- (5.0140,2.1511) -- (5.0370,2.1683) -- (5.0600,2.1856) -- (5.0830,2.2030) -- (5.1060,2.2207) -- (5.1290,2.2385) -- (5.1520,2.2564) -- (5.1750,2.2745) -- (5.1980,2.2928) -- (5.2210,2.3113) -- (5.2440,2.3300) -- (5.2670,2.3488) -- (5.2900,2.3679) -- (5.3130,2.3871) -- (5.3360,2.4065) -- (5.3590,2.4261) -- (5.3820,2.4459) -- (5.4050,2.4659) -- (5.4280,2.4862) -- (5.4510,2.5066) -- (5.4740,2.5273) -- (5.4970,2.5481) -- (5.5200,2.5692) -- (5.5430,2.5906) -- (5.5660,2.6121) -- (5.5890,2.6340) -- (5.6120,2.6560) -- (5.6350,2.6783) -- (5.6580,2.7008) -- (5.6810,2.7236) -- (5.7040,2.7467) -- (5.7270,2.7701) -- (5.7500,2.7937) -- (5.7730,2.8176) -- (5.7960,2.8417) -- (5.8190,2.8662) -- (5.8420,2.8910) -- (5.8650,2.9160) -- (5.8880,2.9414) -- (5.9110,2.9671) -- (5.9340,2.9931) -- (5.9570,3.0194) -- (5.9800,3.0461) -- (6.0030,3.0731) -- (6.0260,3.1005) -- (6.0490,3.1283) -- (6.0720,3.1564) -- (6.0950,3.1848) -- (6.1180,3.2137) -- (6.1410,3.2430) -- (6.1640,3.2726) -- (6.1870,3.3027) -- (6.2100,3.3332) -- (6.2330,3.3641) -- (6.2560,3.3955) -- (6.2790,3.4273) -- (6.3020,3.4595) -- (6.3250,3.4923) -- (6.3480,3.5255) -- (6.3710,3.5592) -- (6.3940,3.5935) -- (6.4170,3.6282) -- (6.4400,3.6635) -- (6.4630,3.6994) -- (6.4860,3.7358) -- (6.5090,3.7728) -- (6.5320,3.8103) -- (6.5550,3.8485) -- (6.5780,3.8873) -- (6.6010,3.9268) -- (6.6240,3.9669) -- (6.6470,4.0076) -- (6.6700,4.0491) -- (6.6930,4.0913) -- (6.7160,4.1342) -- (6.7390,4.1779) -- (6.7620,4.2223) -- (6.7850,4.2676) -- (6.8080,4.3136) -- (6.8310,4.3605) -- (6.8540,4.4083) -- (6.8770,4.4569) -- (6.9000,4.5065) -- (6.9230,4.5571) -- (6.9460,4.6086) -- (6.9690,4.6611) -- (6.9920,4.7147) -- (7.0150,4.7693) -- (7.0380,4.8250) -- (7.0610,4.8819) -- (7.0840,4.9400) -- (7.1070,4.9993) -- (7.1300,5.0598) -- (7.1530,5.1217) -- (7.1760,5.1849) -- (7.1990,5.2495) -- (7.2220,5.3155) -- (7.2450,5.3830) -- (7.2680,5.4521) -- (7.2910,5.5228) -- (7.3140,5.5951)}
\def\monopolesmallpath{(0.0000,0.0000) -- (0.0230,0.0026) -- (0.0460,0.0051) -- (0.0690,0.0077) -- (0.0920,0.0103) -- (0.1150,0.0128) -- (0.1380,0.0154) -- (0.1610,0.0180) -- (0.1840,0.0205) -- (0.2070,0.0231) -- (0.2300,0.0257) -- (0.2530,0.0282) -- (0.2760,0.0308) -- (0.2990,0.0334) -- (0.3220,0.0360) -- (0.3450,0.0385) -- (0.3680,0.0411) -- (0.3910,0.0437) -- (0.4140,0.0463) -- (0.4370,0.0488) -- (0.4600,0.0514) -- (0.4830,0.0540) -- (0.5060,0.0566) -- (0.5290,0.0592) -- (0.5520,0.0618) -- (0.5750,0.0643) -- (0.5980,0.0669) -- (0.6210,0.0695) -- (0.6440,0.0721) -- (0.6670,0.0747) -- (0.6900,0.0773) -- (0.7130,0.0799) -- (0.7360,0.0825) -- (0.7590,0.0851) -- (0.7820,0.0878) -- (0.8050,0.0904) -- (0.8280,0.0930) -- (0.8510,0.0956) -- (0.8740,0.0982) -- (0.8970,0.1008) -- (0.9200,0.1035) -- (0.9430,0.1061) -- (0.9660,0.1087) -- (0.9890,0.1114) -- (1.0120,0.1140) -- (1.0350,0.1167) -- (1.0580,0.1193) -- (1.0810,0.1220) -- (1.1040,0.1246) -- (1.1270,0.1273) -- (1.1500,0.1300) -- (1.1730,0.1326) -- (1.1960,0.1353) -- (1.2190,0.1380) -- (1.2420,0.1407) -- (1.2650,0.1433) -- (1.2880,0.1460) -- (1.3110,0.1487) -- (1.3340,0.1514) -- (1.3570,0.1541) -- (1.3800,0.1569) -- (1.4030,0.1596) -- (1.4260,0.1623) -- (1.4490,0.1650) -- (1.4720,0.1677) -- (1.4950,0.1705) -- (1.5180,0.1732) -- (1.5410,0.1760) -- (1.5640,0.1787) -- (1.5870,0.1815) -- (1.6100,0.1843) -- (1.6330,0.1870) -- (1.6560,0.1898) -- (1.6790,0.1926) -- (1.7020,0.1954) -- (1.7250,0.1982) -- (1.7480,0.2010) -- (1.7710,0.2038) -- (1.7940,0.2066) -- (1.8170,0.2094) -- (1.8400,0.2123) -- (1.8630,0.2151) -- (1.8860,0.2180) -- (1.9090,0.2208) -- (1.9320,0.2237) -- (1.9550,0.2266) -- (1.9780,0.2294) -- (2.0010,0.2323) -- (2.0240,0.2352) -- (2.0470,0.2381) -- (2.0700,0.2410) -- (2.0930,0.2439) -- (2.1160,0.2469) -- (2.1390,0.2498) -- (2.1620,0.2528) -- (2.1850,0.2557) -- (2.2080,0.2587) -- (2.2310,0.2616) -- (2.2540,0.2646) -- (2.2770,0.2676) -- (2.3000,0.2706) -- (2.3230,0.2736) -- (2.3460,0.2767) -- (2.3690,0.2797) -- (2.3920,0.2827) -- (2.4150,0.2858) -- (2.4380,0.2888) -- (2.4610,0.2919) -- (2.4840,0.2950) -- (2.5070,0.2981) -- (2.5300,0.3012) -- (2.5530,0.3043) -- (2.5760,0.3074) -- (2.5990,0.3106) -- (2.6220,0.3137) -- (2.6450,0.3169) -- (2.6680,0.3201) -- (2.6910,0.3233) -- (2.7140,0.3265) -- (2.7370,0.3297) -- (2.7600,0.3329) -- (2.7830,0.3361) -- (2.8060,0.3394) -- (2.8290,0.3426) -- (2.8520,0.3459) -- (2.8750,0.3492) -- (2.8980,0.3525) -- (2.9210,0.3558) -- (2.9440,0.3592) -- (2.9670,0.3625) -- (2.9900,0.3659) -- (3.0130,0.3693) -- (3.0360,0.3727) -- (3.0590,0.3761) -- (3.0820,0.3795) -- (3.1050,0.3829) -- (3.1280,0.3864) -- (3.1510,0.3899) -- (3.1740,0.3933) -- (3.1970,0.3968) -- (3.2200,0.4004) -- (3.2430,0.4039) -- (3.2660,0.4075) -- (3.2890,0.4110) -- (3.3120,0.4146) -- (3.3350,0.4182) -- (3.3580,0.4219) -- (3.3810,0.4255) -- (3.4040,0.4292) -- (3.4270,0.4328) -- (3.4500,0.4365) -- (3.4730,0.4403) -- (3.4960,0.4440) -- (3.5190,0.4478) -- (3.5420,0.4515) -- (3.5650,0.4553) -- (3.5880,0.4592) -- (3.6110,0.4630) -- (3.6340,0.4669) -- (3.6570,0.4708) -- (3.6800,0.4747) -- (3.7030,0.4786) -- (3.7260,0.4826) -- (3.7490,0.4865) -- (3.7720,0.4905) -- (3.7950,0.4946) -- (3.8180,0.4986) -- (3.8410,0.5027) -- (3.8640,0.5068) -- (3.8870,0.5109) -- (3.9100,0.5150) -- (3.9330,0.5192) -- (3.9560,0.5234) -- (3.9790,0.5276) -- (4.0020,0.5319) -- (4.0250,0.5362) -- (4.0480,0.5405) -- (4.0710,0.5448) -- (4.0940,0.5492) -- (4.1170,0.5536) -- (4.1400,0.5580) -- (4.1630,0.5625) -- (4.1860,0.5669) -- (4.2090,0.5714) -- (4.2320,0.5760) -- (4.2550,0.5806) -- (4.2780,0.5852) -- (4.3010,0.5898) -- (4.3240,0.5945) -- (4.3470,0.5992) -- (4.3700,0.6039) -- (4.3930,0.6087) -- (4.4160,0.6135) -- (4.4390,0.6184) -- (4.4620,0.6232) -- (4.4850,0.6282) -- (4.5080,0.6331) -- (4.5310,0.6381) -- (4.5540,0.6432) -- (4.5770,0.6482) -- (4.6000,0.6533) -- (4.6230,0.6585) -- (4.6460,0.6637) -- (4.6690,0.6689) -- (4.6920,0.6742) -- (4.7150,0.6795) -- (4.7380,0.6849) -- (4.7610,0.6903) -- (4.7840,0.6957) -- (4.8070,0.7012) -- (4.8300,0.7068) -- (4.8530,0.7124) -- (4.8760,0.7180) -- (4.8990,0.7237) -- (4.9220,0.7294) -- (4.9450,0.7352) -- (4.9680,0.7411) -- (4.9910,0.7470) -- (5.0140,0.7529) -- (5.0370,0.7589) -- (5.0600,0.7650) -- (5.0830,0.7711) -- (5.1060,0.7772) -- (5.1290,0.7835) -- (5.1520,0.7897) -- (5.1750,0.7961) -- (5.1980,0.8025) -- (5.2210,0.8090) -- (5.2440,0.8155) -- (5.2670,0.8221) -- (5.2900,0.8287) -- (5.3130,0.8355) -- (5.3360,0.8423) -- (5.3590,0.8491) -- (5.3820,0.8561) -- (5.4050,0.8631) -- (5.4280,0.8702) -- (5.4510,0.8773) -- (5.4740,0.8845) -- (5.4970,0.8919) -- (5.5200,0.8992) -- (5.5430,0.9067) -- (5.5660,0.9143) -- (5.5890,0.9219) -- (5.6120,0.9296) -- (5.6350,0.9374) -- (5.6580,0.9453) -- (5.6810,0.9533) -- (5.7040,0.9614) -- (5.7270,0.9695) -- (5.7500,0.9778) -- (5.7730,0.9861) -- (5.7960,0.9946) -- (5.8190,1.0032) -- (5.8420,1.0118) -- (5.8650,1.0206) -- (5.8880,1.0295) -- (5.9110,1.0385) -- (5.9340,1.0476) -- (5.9570,1.0568) -- (5.9800,1.0661) -- (6.0030,1.0756) -- (6.0260,1.0852) -- (6.0490,1.0949) -- (6.0720,1.1047) -- (6.0950,1.1147) -- (6.1180,1.1248) -- (6.1410,1.1350) -- (6.1640,1.1454) -- (6.1870,1.1559) -- (6.2100,1.1666) -- (6.2330,1.1774) -- (6.2560,1.1884) -- (6.2790,1.1995) -- (6.3020,1.2108) -- (6.3250,1.2223) -- (6.3480,1.2339) -- (6.3710,1.2457) -- (6.3940,1.2577) -- (6.4170,1.2699) -- (6.4400,1.2822) -- (6.4630,1.2948) -- (6.4860,1.3075) -- (6.5090,1.3205) -- (6.5320,1.3336) -- (6.5550,1.3470) -- (6.5780,1.3606) -- (6.6010,1.3744) -- (6.6240,1.3884) -- (6.6470,1.4027) -- (6.6700,1.4172) -- (6.6930,1.4320) -- (6.7160,1.4470) -- (6.7390,1.4623) -- (6.7620,1.4778) -- (6.7850,1.4936) -- (6.8080,1.5098) -- (6.8310,1.5262) -- (6.8540,1.5429) -- (6.8770,1.5599) -- (6.9000,1.5773) -- (6.9230,1.5950) -- (6.9460,1.6130) -- (6.9690,1.6314) -- (6.9920,1.6501) -- (7.0150,1.6693) -- (7.0380,1.6888) -- (7.0610,1.7087) -- (7.0840,1.7290) -- (7.1070,1.7497) -- (7.1300,1.7709) -- (7.1530,1.7926) -- (7.1760,1.8147) -- (7.1990,1.8373) -- (7.2220,1.8604) -- (7.2450,1.8841) -- (7.2680,1.9082) -- (7.2910,1.9330) -- (7.3140,1.9583) -- (7.3370,1.9842) -- (7.3600,2.0108) -- (7.3830,2.0379) -- (7.4060,2.0658) -- (7.4290,2.0944) -- (7.4520,2.1237) -- (7.4750,2.1538) -- (7.4980,2.1846) -- (7.5210,2.2163) -- (7.5440,2.2488) -- (7.5670,2.2822) -- (7.5900,2.3165) -- (7.6130,2.3519) -- (7.6360,2.3882) -- (7.6590,2.4256) -- (7.6820,2.4641) -- (7.7050,2.5037) -- (7.7280,2.5446) -- (7.7510,2.5867) -- (7.7740,2.6302) -- (7.7970,2.6750) -- (7.8200,2.7213) -- (7.8430,2.7692) -- (7.8660,2.8187) -- (7.8890,2.8699) -- (7.9120,2.9228) -- (7.9350,2.9777) -- (7.9580,3.0346) -- (7.9810,3.0936) -- (8.0040,3.1548) -- (8.0270,3.2184) -- (8.0500,3.2845) -- (8.0730,3.3533) -- (8.0960,3.4249) -- (8.1190,3.4995) -- (8.1420,3.5773) -- (8.1650,3.6585) -- (8.1880,3.7434) -- (8.2110,3.8322) -- (8.2340,3.9252) -- (8.2570,4.0227) -- (8.2800,4.1250) -- (8.3030,4.2325) -- (8.3260,4.3456) -- (8.3490,4.4648) -- (8.3720,4.5906) -- (8.3950,4.7235) -- (8.4180,4.8641) -- (8.4410,5.0133) -- (8.4640,5.1717) -- (8.4870,5.3402) -- (8.5100,5.5200)}
\def\figW{9.2000}
\def\figH{5.6000}
\def\axXend{9.5500}
\def\axYend{5.9500}
\def\xmid{4.6000}
\def\xend{9.2000}
\def\yone{1.8667}
\def\ytwo{3.7333}
\def\ythree{5.6000}
\def\labX{4.6000}
\def\labY{2.8000}
\def\legAxA{0.6440}
\def\legBxA{1.5640}
\def\legLxA{1.8400}
\def\legyA{5.3387}
\def\legAxB{0.6440}
\def\legBxB{1.5640}
\def\legLxB{1.8400}
\def\legyB{4.7787}
\def\legAxC{0.6440}
\def\legBxC{1.5640}
\def\legLxC{1.8400}
\def\legyC{4.2187}
\def\thrLx{0.2760}
\def\thrLy{2.1093}

  %% the Randers threshold
  \draw[gray!55,dashed,line width=0.5pt] (0,\yone) -- (\xend,\yone);
  \node[anchor=west,font=\footnotesize,gray!60!black] at (\thrLx,\thrLy)
       {Randers condition};

  %% axes
  \draw[line width=0.6pt] (0,0) -- (\axXend,0);
  \draw[line width=0.6pt] (0,0) -- (0,\axYend);
  \draw[line width=0.6pt] (0,0) -- (0,-0.09) node[below,font=\footnotesize] {$0$};
  \draw[line width=0.6pt] (\xmid,0) -- (\xmid,-0.09)
       node[below,font=\footnotesize] {$\pi/2$};
  \draw[line width=0.6pt] (\xend,0) -- (\xend,-0.09)
       node[below,font=\footnotesize] {$\pi$};
  \draw[line width=0.6pt] (0,\yone)   -- (-0.09,\yone)   node[left,font=\footnotesize] {$1$};
  \draw[line width=0.6pt] (0,\ytwo)   -- (-0.09,\ytwo)   node[left,font=\footnotesize] {$2$};
  \draw[line width=0.6pt] (0,\ythree) -- (-0.09,\ythree) node[left,font=\footnotesize] {$3$};
  \node[below,font=\footnotesize] at (\xmid,-0.62) {colatitude $\theta$};
  \node[rotate=90,font=\footnotesize] at (-0.92,\labY) {$\vert A\vert_g/v$};

  %% the two lengths
  \draw[line width=1.1pt] \monopolepath;
  \draw[line width=1.1pt,densely dashed] \monopolesmallpath;
  \draw[line width=1.1pt,gray!70] \uniformpath;

  %% the equator, where both shape functions take the value one
  \fill (\xmid,\yone) circle (1.5pt);

  %% legend
  \draw[line width=1.1pt] (\legAxA,\legyA) -- (\legBxA,\legyA);
  \node[anchor=west,font=\footnotesize] at (\legLxA,\legyA) {monopole};
  \draw[line width=1.1pt,densely dashed] (\legAxB,\legyB) -- (\legBxB,\legyB);
  \node[anchor=west,font=\footnotesize] at (\legLxB,\legyB)
       {monopole, a third of the charge};
  \draw[line width=1.1pt,gray!70] (\legAxC,\legyC) -- (\legBxC,\legyC);
  \node[anchor=west,font=\footnotesize] at (\legLxC,\legyC)
       {uniform field at $\alpha = v$};
\end{tikzpicture}

  \caption{The Randers condition in the two fields of
  Section~\ref{sec.sphere}. The length of the effective potential, measured
  against the speed, is $(\alpha/v)\sin\theta$ in the uniform ambient
  field~\eqref{eq.spherefield} and $(qk/mv)\tan(\theta/2)$ on a chart of the
  monopole~\eqref{eq.wuyang}. Both shape functions take the value one on the equator, where the curves meet. The first is bounded and reaches the condition
  only at the threshold~\eqref{eq.threshold}, $\alpha = v$, whilst the second
  is unbounded at the pole its chart omits, and no charge however small brings
  it under the condition.}
  \label{fig.randers}
\end{figure}

However, a transport remains, and it is not the one Section~\ref{sec.transport} built. That one is
the Randers spray's own, and the spray is assembled from $F_{\rm R}$, which contains the potential itself --- its middle term needs no primitive, whilst the metric's own spray and the tangential
term ask for one. In its place we have the connection in Ingarden's and Miron's form,
\beq \label{eq.lorentzconn}
N^i{}_j = \mathring{\Gamma}^i{}_{jk}\,y^k - \frac{q}{m}\,g^{ik}F_{kj} ,
\eeq
which is assembled from the field itself. It is defined wherever $F$ is, with the trajectories of the charge for its autoparallel curves. At vanishing charge it returns the Christoffel symbols, as the other connection does. Thus, the distinction we drew there decides the example. Their connection has the field at degree zero, needing no primitive, whilst the one the Randers metric determines is the metric's own, needing one. Their arrangement gives up homogeneity for that freedom, since the connection~\eqref{eq.lorentzconn} fixes a parametrisation where the metric fixed none.

%%------------------------------------------------------------
\subsection{The lattice of charges}\label{sec.lattice}
%%------------------------------------------------------------
The construction has left the multiplier free on a line at every stage. It is constrained nowhere by the geometry, merely made invariant by the flux. One further question is first answered by the monopole. We ask when the kinematic structure a charge selects admits a prequantum line bundle. The answer requires one further input, which we declare here. A unit of action $\hbar$ is a constitutive quantity of the same kind as the mass and the charge, brought to the construction, not derived within it. With it the question above has a classical answer. The structure $(\,T^*\Sph^2, \omega_q)$ admits such a bundle when the class of $\omega_q/2\pi\hbar$ is integral. Since $\omega_0$ is exact that class comes from the twist alone, so with the flux~\eqref{eq.monopolefield} the condition becomes
\beq \label{eq.dirac}
-\frac{q}{2\pi\hbar}\oint_{\Sph^2} F = -\frac{2qk}{\hbar} \in \mathbb{Z} ,
\quad \text{that is} \quad
qk = \frac{n\hbar}{2} ,
\eeq
which is Dirac's condition on the product of an electric and a magnetic charge, with $q$ the charge a particle carries, $k$ the strength of the monopole and $\hbar$ the unit of action we have brought.

The quantisation condition is standard. It follows here from the usual Weil integrality condition together with the Kostant--Souriau prequantisation of the twisted cotangent bundle~\cite{souriau1970,woodhouse1992}, the bundle over the configuration space following by projection. We add what it says about the particle-dependent multiplier the kinematic classification has already left free. Solving the condition~\eqref{eq.dirac} for the charge,
\beq \label{eq.lattice}
q \in \frac{\hbar}{2k}\,\mathbb{Z} ,
\eeq
so that within the prequantisable subclass the affine line of kinematic structures is restricted to a discrete lattice of admissible multipliers. In the lattice~\eqref{eq.lattice} we see in what manner each ingredient enters. Its spacing is fixed by the flux, through the strength $k$, its scale by the unit of action. A charge is no longer free to take any real value once the field it moves in is of non-zero flux. Weaker fields admit a coarser lattice, whilst in the limit $k \to 0$ the spacing grows without bound, which is the flux-free case of Section~\ref{sec.uniform} returning the whole line.

The transport is untouched by any of it. The connection~\eqref{eq.lorentzconn} is assembled from
the metric and the field, both of them defined over the whole sphere, so it is there at every real
charge and the condition~\eqref{eq.dirac} is none of its business. The bundle needed there is the prequantum bundle of the structure, a different bundle over a
different space. Thus, the
flux constrains the two halves of the pair unequally. It forbids the form outright, and leaves the transport in place, unquantised. That is as far as the construction goes, further than the geometry alone could. The step required a unit we had to bring.

%%============================================================
\section{Discussion}\label{sec.spacetime}
%%============================================================

\emph{What must a structure do to be called spacetime?} We have found, for a charge, the pair inertial motion requires --- a form together with a law of transport --- one level up, on the bundle, and have seen a flux part them. Whether a structure depending on the particle which moves in it is a spacetime is a question we put to a criterion of universality, taken from someone who did not have our example in view.

Knox proposes that the spacetime role is played by whatever defines a structure of local inertial
frames, and applies it to Newtonian gravitation in two steps~\cite{knox2011,knox2014}. The split
between gravity and inertia is not observable, no experiment distinguishing a homogeneous field
from a uniform acceleration, so only the sum of the two is a candidate for structure. And her
Newtonian Strong Equivalence Principle asks that in a small enough region there be a freely
falling frame in whose coordinates the motions of bodies amongst themselves are those of bodies
free of external forces, which identifies the freely falling frames with the inertial ones and
makes the connection of geometrised Newtonian gravitation the one representing spacetime
structure. Both steps are hers, taken here unchanged.

Our construction offers the same two for a charge. Gauge freedom in the potential is a symmetry of
the kinematics, and a charge runs along the autoparallel curves of the nonlinear connection
assembled from the field, so that a frame adapted to that structure is to a charge what
a freely falling frame is to a massive body. Were the criterion to answer alike in both cases, the
geometry built here would be the spacetime of a charged particle.

However, it does not answer alike. Knox's principle asks after \emph{the motions of bodies amongst themselves}, quantifying over all bodies at once. That quantifier our structures cannot satisfy. A frame in which one value of the ratio moves inertially is a frame in which every other value moves under a force. In the structure attached to one value of $q/m$, a particle of another value $q'/m'$ at the same speed departs from inertial motion by
\beq \label{eq.mismatch}
\left( \frac{q}{m} - \frac{q'}{m'} \right)\iiota_{\dot\gamma}F .
\eeq
Linear in the mismatch of the two ratios, the failure vanishes with it. That it vanishes for a homogeneous field is de Matos, {\"O}zer and Izworski's observation~\cite{dematos2018}. What follows below needs only the departure~\eqref{eq.mismatch}, which is proved here. For a single species the criterion is met, that species having its own structure of local inertial
frames, and the failure is across species. \emph{The equivalence principle is the collapse of that
line. Knox's condition is the collapse read as a criterion.}

Thus, we reach a conclusion Knox draws herself, that not every connection represents spacetime structure. Our construction exhibits a connection defining a structure of local inertial frames for one species of particle --- an effective inertial structure by her criterion of universality. The obstruction to calling it a spacetime is its dependence on the ratio a particle carries. That dependence is the failure of the line to collapse. Whether some other construction geometrises electromagnetism is a question we neither ask nor settle.

Let us grant the objection that follows, since it is our own conclusion arriving from the other side. Read and Menon argue that inertial frame structure comes apart from the structure surveyed by rods and clocks built from matter fields, a gap which Knox's functionalism inherits~\cite{readmenon2019}. Indeed, a charge-dependent structure is surveyed by no neutral rod, nor by any single instrument. We take that as decisive and have called these structures effective throughout.

There remains the quantity by which the species is labelled, upon which the construction says something the criterion cannot. That ratio is a constitutive quantity brought to the theory, while the classification places it --- in the multiplier the geometry leaves free. Inertial motion leaves the scale of the bilinear form and the antisymmetric part of the transport unresolved~\cite{lm2026newton}, the first of them settled by a particle as its mass, the second as a candidate for its spin. \emph{Each constitutive quantity is a freedom the inertial structure cannot resolve.} It holds over three instances. We would put the same question to any quantity a particle is said to bring.

Let us state as much of that as has been proved. The affine reparametrisations of
Section~\ref{sec.ratio} leave the quotient fixed, so a trajectory keeps the mass and the charge
only through the ratio $q/m$, and each of its values selects one member of the twisted
family~\eqref{eq.family} up to those reparametrisations --- by the Hilbert-form
identification~\eqref{eq.hilbert}, one Randers metric on the free level whose geodesics are the
trajectories. Were one ratio common to every body, all of them would select the same member and
inertial motion would be the geodesic flow of one metric. A line which does not collapse gives
each species a structure of its own, where a collapsed line gives one to every body at once.

The manner of the failure decides the local-frame requirement.
The departure~\eqref{eq.mismatch} is the field contracted with the velocity at the point,
multiplied by the mismatch of the ratios, with neither a derivative of the field in it nor any
separation between the two motions. It therefore does not diminish as the two are brought together. A tidal
term does diminish, the relative acceleration of two motions of one family vanishing with their
separation, which is what leaves a local inertial frame at every point of a gravitational field. Here no local
frame is free for both species at once wherever the field does not vanish, and there is the whole
of the difference between an arena and a structure attached to one species.

That each ratio has a geometry of its own is {\"O}zer's, who postulates an equivalence principle for electromagnetism and builds the metric which makes the Lorentz motion geodesic~\cite{ozer1999}. Here the family is derived from the Legendre demand and the metric arrives on the free level, the order of the two being the whole of the difference.

An arena common to every body --- a spacetime, on her criterion --- comes from the collapse of that line, which is to say by an equivalence principle. Such a principle is not about a force. It
asserts that one ratio serves every body, and a single family of inertial motions is thereby left,
from whose relative deviation a geometry is obtained. Universality is therefore contingent,
resting on the measured equality of the gravitational and inertial masses. Electromagnetism is the
case in which it fails.

The counterfactual is de Matos, {\"O}zer and Izworski's, who put an equivalence principle to each
family of particles sharing a ratio and record that ours is the universe with a single
family~\cite{dematos2018}. For them it is an extension of the principle, where for us it is the
reason no spacetime follows. Peres deflated the principle earlier and further, holding that rods
and clocks are macroscopic bodies bound by the field whose geometry they are used to measure, so
that an equivalence principle is no law of Nature but a law of the language in which Nature is
described~\cite{peres1962}. We put the contingency elsewhere, on a measured equality of masses and
on a species-dependence the classification exhibits. In gravitation the two masses coincide, so one
ratio serves every body, and a gravitational force, being of degree zero in the velocity, enters
the energy where our field twists the form, leaving the single affine connection in which Knox
finds spacetime structure. Here the charge and the mass are independent, and our line keeps every
member. \emph{An equivalence principle leaves a single family of inertial motions, from whose
deviation a geometry is obtained. Where the families are many, no geometry is common to them.}

%%============================================================
\section{Final remarks}\label{sec.closing}
%%============================================================

A field adds no force to the motion of a charge. It changes the geometry with respect to which that motion is inertial. The charge-to-mass ratio gives the amount of the change. On the phase space that geometry is fixed entirely, with one multiplier left free within it. On the configuration space nothing of the kind is fixed, since no affine connection serves there, whilst one level up, on the bundle, we obtain a metric together with a law of transport. Only the transport remains once the flux admits no primitive. Which symplectic structures are kinematic we did not choose (Theorem~\ref{theo.kin}). Neither did we choose where a field may enter Newton's Second Law through a constitutive morphism (Proposition~\ref{prop.degreetwo}). For the rest we followed a route, with two constitutive declarations along it --- the posted equality~\eqref{eq.constitutive} for the charge, and the posted equality~\eqref{eq.postmomentum} for the momentum of a motion.

Thus, one account locates both of the failures behind the construction. A nonsymmetric metric
was tried and gave no Lorentz force~\cite{einstein1945,schrodinger1950,goenner2014}. An affine
connection was tried next and did no better~\cite{barros2005gauss}. Each had been argued on its own
terms, and they fail at neighbouring places in it. The antisymmetric part of a
bilinear morphism contributes a term of degree two which the congruence fixes and the velocity at a
point does not, where a Lorentz force is of degree one and pointwise
(Proposition~\ref{prop.degreetwo}). The Randers spray contributes a term of degree two as well, as
every spray must, and that term is pointwise, failing only to be quadratic in the velocity --- which
is the freedom an affine connection lacks and the one Section~\ref{sec.transport} takes up. Neither
failure is the other, and the decomposition~\eqref{eq.spray} together with
Proposition~\ref{prop.degreetwo} says where each of them falls.

The metric which serves in place of the connection was taken from the structure the flow already
occupies. On the free level the primitive of the twisted form is $v$ times the Hilbert form of a
Randers metric, put there by the Liouville field, a dilation centred at the potential
(Section~\ref{sec.finsler}). Read as a length, that metric answers differently to an indefinite
signature. Cauchy--Schwarz reverts on the timelike cone, the condition on the potential becomes a
lower bound which grows with it, and the trajectories are the longest curves of a length Randers built for the shortest (Section~\ref{sec.indefinite}). The dynamics is indifferent to
all of that, needing only that the Randers function not vanish, so the length and the law come
apart where the metric is indefinite.

The order in which we admitted those structures shows its value only from the end. Nothing in the metric or in its law of transport was assumed before the motion needed it. A reader who does not grant a step may therefore see how much of the construction depends on it, and how much is given without it. That is what the order gives us.

Electromagnetism gives a geometry to every value of the charge-to-mass ratio, one too many for a single arena. In Section~\ref{sec.spacetime} an equivalence principle asserts the collapse of that line to a single member, as gravitation does once its two masses are found equal by measurement. Universality is therefore a property of the collapse itself.

Section~\ref{sec.lattice} is the one place where this construction fixes a number. The condition is that the structure admit a prequantum bundle. It falls on the charge, since the charge multiplies the class of the field in cohomology. A mass is only ever a scale, and a scale admits no lattice. Klein's fifth dimension gives the same conclusion from the other side~\cite{klein1926}, where a mass --- a momentum around a compact direction --- is quantised by the periodicity which quantises the charge. Change the freedom and the answer changes. In this sense the question is one about the freedom a quantity ranges over.

The mass and the spin have been treated in this manner before~\cite{lm2026newton}. The charge is the third. Each is a freedom which inertial motion leaves open and which a particle closes on its own account. The charge is the one of the three which the geometry closes in part as well, once a prequantum bundle is needed. Whether that belongs to electromagnetism alone, or is the first instance of a wider rule, we raise here without settling.

Let us set out the matters we leave open. The first is the multiplier, with which our own method is
least at ease. The rank analysis of Section~\ref{sec.charge} settles what the
classification asks of the charge, which above rank four is constancy and on a surface is nothing
at all. A charge varying from place to place is therefore left standing by the geometry. What excludes
it comes from outside, in the invariance requirement we import~\cite{lm2026newton} and in our own
reading of the multiplier as the charge of one particle. A manuscript which admits each
structure at the point of need, and marks each declaration where it is made, would rather have
derived that one. Settling it from within would take a principle of invariance for inertial motion
stronger than the one we have, and we do not have it.

A second question the construction raises and does not answer concerns the two-form itself. Feynman's
argument was carried to a non-Abelian gauge theory~\cite{tanimura1992}, where the twist would take
its values in the adjoint bundle rather than in the reals, and Sternberg's reduction of a principal
bundle~\cite{sternberg1977,guillemin1990} is where such a twist would already have its setting. Whether the Legendre demand, asked of every energy at once, classifies the structures there
as it does here we have not examined. The question is worth putting precisely, and we put it
without pursuing it. The curvature of the metrics we obtain, given explicitly for the round sphere in a uniform field, we leave to a manuscript of its own. The other half of Maxwell's equations we have not entered, but it would be remiss not to say what would be needed for them. The excitation and the current need a metric together with a second constitutive declaration, and a spacetime where we have posed a configuration manifold alone.

The order in which we demanded the pieces of this construction is ours. It shows a reader, at every step, how much of the answer was already given before a charge was mentioned at all. Fields are inputs throughout, and every trajectory in the manuscript is one the Lorentz force law already gives. In sum, we conclude that the motion of a charged particle fixes the class of its kinematic structures and the degree at which a field may enter its law of motion, whilst the geometry we obtain from it is one a body shares only with those of its own charge-to-mass ratio. For the first two the mathematics leaves us no choice. For the third it leaves us one, and that is why this construction ends without a spacetime.

%%============================================================
%%  Bibliography
%%============================================================
\bibliographystyle{amsplain}
\bibliography{kinematics}

\end{document}